 \theoremstyle{plain}
\def\emphh#1{{\bf #1}}
\newtheorem{theorem}{Theorem}[section]
\newtheorem{lemma}[theorem]{Lemma}
\newtheorem{corollary}[theorem]{Corollary}
\theoremstyle{definition}
\newtheorem{remark}[theorem]{Remark}
\newif\iflong
\newif\ifdiag
\begin{document}

%\defcitealias{M06}{M06}

\title{Embedding  graphs into embedded graphs}

\author{
{Radoslav Fulek\footnote{
The research leading to these results has received funding from the People Programme (Marie Curie Actions) of the European Union's Seventh Framework Programme (FP7/2007-2013) under REA grant agreement no [291734].}
}
}

\date{}

\maketitle

\begin{abstract}
A (possibly denerate) drawing of a graph $G$ in the plane is approximable by an embedding if it can be turned into an embedding by an arbitrarily small perturbation.
We show that testing, whether a straight-line drawing of a planar graph $G$ in the plane is approximable by an embedding, can be carried out in  polynomial time, if a desired embedding of $G$ belongs to a fixed isotopy class.
In other words, we show that c-planarity with embedded pipes is tractable for graphs with fixed embeddings.

To the best of our knowledge an analogous result was previously known essentially only when $G$ is a cycle.
 \end{abstract}

\section{Introduction}
 
In the theory of graph visualization  a drawing of a graph $G=(V,E)$ in the plane is usually assumed to be free of degeneracies, i.e., edge overlaps and edges passing through a vertex. However, in practice 
degenerate drawings often arise and need to be dealt with.

Recent papers~\cite{AAET16,ChEX15} address a certain aspect of this problem for simple polygons which can be thought of as straight-line (rectilinear)
embeddings of graph cycles.
  Chang et al.~\cite{ChEX15} gave an $O(n^2\log n)$-time algorithm to detect if a given polygon
with $n$ vertices can be turned into a simple (non self-intersecting) one by small perturbations of its vertices, or in other words if the polygon is \emphh{weakly simple}.
 We mention that there exists an earlier closely related definition of weakly simple polygons  by Toussaint~\cite{B93,T89}, however, as pointed out in~\cite{ChEX15} this notion is not well-defined for general polygons with ``spurs'',
 see~\cite{ChEX15} for an overview of attempts at combinatorial definitions
 of polygons not crossing itself.
 
An $O(n\log n)$ improvement on the running time of the algorithm by Chang et al.  was announced very recently by Akitaya et al.~\cite{AAET16}.
The combinatorial formulation of this problem corresponds to the setting of \emphh{c-planarity
with embedded pipes} introduced by Cortese et al.~\cite{CDPP09} well before the two aforementioned papers.
Therein only an $O(n^3)$-time algorithm for the problem was given.
Nevertheless, the algorithms in~\cite{AAET16,ChEX15} were built  upon the ideas from~\cite{CDPP09}.  Moreover, to the best of our knowledge the complexity status
of the c-planarity with embedded pipes is essentially known only for cycles.
Recently the problem was studied for general planar graphs  by Angelini and Da Lozzo~\cite{AL16}, but they gave only an FPT algorithm.
The introduction of this problem was motivated by a more general and well known problem of \emphh{c-planarity} by Feng et al.~\cite{FCEa95,FCEb95},  whose tractability status was open since 1995
 even in much more restricted cases than the one
that we consider. 
Biedl~\cite{B98} gave
 a polynomial-time algorithm for c-planarity with two clusters.
Beyond two clusters a polynomial time algorithm for c-planarity was obtained only in special cases,
e.g.,~\cite{BFPP08,GLS05,GJL+02,JJK+09,JKK+09}, and most recently in~\cite{BR14+,CBFK14+,Fb16}.

There is, however, another tightly related line of research on approximability or realizations of maps pioneered by Sieklucki~\cite{S69}, Minc~\cite{M97} and M.~Skopenkov~\cite{S03} that is completely independent from the aforementioned developments, and that is also a major source of inspiration for our work.
It can be easily seen that the result~\cite[Theorem 1.5]{S03} implies that c-planarity is tractable for flat instances with three clusters or cyclic clustered graphs~\cite[Section 6]{FKMP15} with a fixed isotopy class of a desired embedding. An algorithm with a better running time was given by the author in~\cite{Fb16}.

The aim of the present work is to show that c-planarity with embedded pipes is tractable for  planar graphs with a fixed isotopy class of embeddings, which extends results of~\cite{ADDF17,AL16,Fb16}. Our work also implies the tractability of deciding whether a drawing is approximable by an embedding in a fixed isotopy class, which extends  results of~\cite{AAET16,ChEX15}. This also answers in the affirmative a question posed in~\cite[Section 8.2]{ChEX15} if the isotopy class of an embedding of $G$ is fixed.  
%More precisely, borrowing notation from~\cite{ChEX15} a straight-line drawing $\mathcal{D}$ of a  planar graph $G$ is \emphh{weakly simple} if for any $\varepsilon>0$ there exists an embedding 
%of $G$ having Fr\'echet distance from $\mathcal{D}$ at most $\varepsilon$.
%We do not impose any assumptions on $\mathcal{D}$. Hence, we
% allow an edge to pass through a vertex or to degenerate to a point, or a pair of edges to overlap.
%Given a straight-line drawing $\mathcal{D}$ of a planar graph $G$
%we are interested in finding a polynomial-time algorithm 
%that outputs whether $\mathcal{D}$ is weakly simple, and if the answer
%is positive, gives a combinatorial description of a  witnessing
% embedding. Note that drawing $\mathcal{D}$ corresponding to positive instances do not contain a pair of properly crossing edges. Hence, throughout the paper we assume that $\mathcal{D}$ is edge
% crossing-free and therefore as a subset of the plane can be interpreted as a straight-line
% embedding of a planar graph $H$. We remark that by the classical F\'ary--Wagner theorem any
% planar graph admits a straight-line embedding in the plane even if the isotopy class of the embedding is prescribed.
 The combinatorial formulation of the  problem, \emphh{c-planarity with embedded pipes} follows, see Fig.~\ref{fig:problem}. We are given \\
 
\noindent {\bf (A)} A planar graph   $G=(V,E)$, whose vertex set is partitioned
into $k$ parts $V=V_{\nu_1}\uplus V_{\nu_2} \uplus \ldots \uplus
 V_{\nu_{k}}$ called \emphh{clusters} given by the isotopy class of an embedding of $G$ in the plane;\\
 {\bf (B)} a planar graph $H=(V(H), E(H))$ that is straight-line embedded in the plane,  where $V(H)=\{\nu_1,\ldots ,\nu_k\}$.

\begin{remark}
Since $H$ is straight-line embedded, $H$ does not contain multiple edges.
The assumption that $H$ is given by a straight-line embedding as opposed to a piecewise linear/polygonal embedding is not crucial, since every planar graph admits a straight-line embedding in the plane by F\'ary--Wagner theorem~\cite{F48,W36}, and its imposition is just a matter of convenience.
\end{remark}

 Let $\mathrm{dist}({\bf p}, {\bf q})$ denote the Euclidean distance between ${\bf p},{\bf q}\in \mathbb{R}^2$. Let 
 $\mathrm{dist}({\bf p}, S) = \min_{{\bf q}\in S} \mathrm{dist}({\bf p}, {\bf q})$, where  $S \subset \mathbb{R}^2$.
Let $N_{\varepsilon}(S)$ for $S \subset \mathbb{R}^2$ denote the $\varepsilon$-neighborhood
of $S$, i.e.,  $N_{\varepsilon}(S) = \{{\bf p }\in \mathbb{R}^2| \ \mathrm{dist}({\bf p},S)\leq \varepsilon\}$.
Let $\varepsilon,\varepsilon'>0$ be small values as described later.  
The \emphh{thickening} $\mathcal{H}$ of $H$ is the union
of $N_\varepsilon(\nu_i)$, for all $\nu_i\in V(H)$ and  $N_{ \varepsilon' }(\rho)$, for all $\rho \in E(H)$~\footnote{Throughout the paper we denote vertices and edges of $H$ by Greek letters.}. Let the \emphh{pipe} of $\rho\in E(H)$
be the closure of $N_{ \varepsilon' }(\rho)\setminus ( N_\varepsilon(\nu_i) \cup N_\varepsilon(\nu_j))$,
where $\rho=\nu_i\nu_j$. Let the \emphh{valve} of $\rho$ at $\nu_i$ be the curve obtained 
as the intersection of  $N_\varepsilon(\nu_i)$ and the pipe  of $\rho$.
We put $\varepsilon'<\varepsilon:=\varepsilon(H)$ so that the  valves are pairwise disjoint in $\mathcal{H}$ and
 $\varepsilon>0$ is smaller than $d/4$, where $d:=d(H)$ is the minimum distance between a vertex $\nu$ of $H$ and an edge $\rho$ of $H$ not incident to $\nu$ over all such edge-vertex pairs.

We want to decide if the given isotopy class of $G$ contains an embedding contained in $\mathcal{H}$, where the vertices in $V_{\nu_i}$, for every $\nu_i$, are drawn in the interior of $N_\varepsilon(\nu_i)$ and every edge crosses the boundary of $N_\varepsilon(\nu_i)$, for every $\nu_i\in V(H)$, at most once. Such an embedding of $G$ is \emphh{$H$-compatible}.
Let  $E_{\nu_i\nu_j}=\{uv\in E(G)| \ u\in V_{\nu_i}, v\in V_{\nu_j}\}$.
An $H$-compatible embedding of $G$ is encoded by $G,H$, and a set of total orders 
$(E_{\nu_i\nu_j}, <_{\omega})$,  for every $\nu_i\nu_j\in E(H)$ and a valve $\omega$ of  $\nu_i\nu_j$, where $(E_{\nu_i\nu_j}, <_{\omega})$ encodes the order of crossings of  $\omega$ with edges along $\omega$. The isotopy class of $G$ is encoded by a choice of the outer face, a set of rotations at its vertices and a containment relation of its connected components as described in Section~\ref{sec:pre}. Since we are interested only in  combinatorial aspects of the problem, $H$ is also given by   the isotopy class of its embedding.
%We want to detect if the isotopy class of $G$ contains an embedding contained inside $\mathcal{H}$ of $H$ for every sufficiently small $\varepsilon>0$,
%%Each edge $e$ of $H$ corresponds to a \emphh{pipe} that supports edges of %$G$ mapped to it by $\nu$. 
%where a sufficiently small $\varepsilon>0$ is smaller than $d/4$, where $d=d(H)$ is the minimum distance between a vertex $v$ of $H$ and an edge $e$ not incident to $v$ over all such edge-vertex pairs.
%Such choice of $\varepsilon$ guarantees that the $\varepsilon$-neighborhood of $H$ preserves the facial
%structure of its embedding.
Throughout the paper we assume that $G$ and $H$ are given as in {\bf (A)} and {\bf (B)}.

\begin{figure}[h]
\centering
\includegraphics[scale=1]{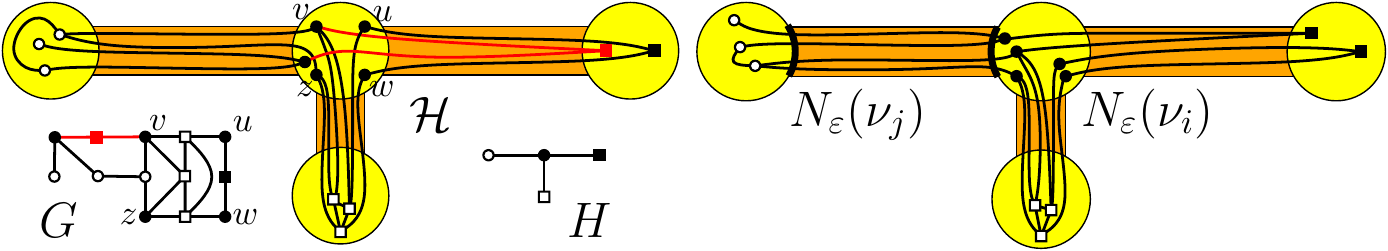}
\caption{Instance of c-planarity with embedded pipes. 
The partition of the vertex set of $G$ into clusters is encoded by the shape of vertices.
An $H$-compatible embedding of a subgraph of $G$ that cannot be extended to the whole $G$ (left). An $H$-compatible embedding of $G$ (right) inside $\mathcal{H}$.  The valves of $\rho=\nu_i\nu_j$ at $N_\varepsilon(\nu_i)$ and $N_\varepsilon(\nu_j)$ are highlighted by bold arcs.}
\label{fig:problem}
\end{figure}
\medskip

\begin{theorem}
\label{theorem:main}
There exists  $O(|V(G)|^2)$ algorithm that decides if the given isotopy class of $G$ contains an $H$-compatible embedding. An $H$-compatible embedding of $G$ can be also constructed in $O(|V(G)|^2)$ time if it exists.
In other words, c-planarity with embedded pipes is tractable, when an  isotopy class of  a desired embedding of $G$ is fixed.
\end{theorem}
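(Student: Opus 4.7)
The plan is to reduce the problem to a collection of local planarity tests at each cluster, glued together by synchronization constraints along the pipes of $H$. Since the isotopy class of $G$ fixes all rotations, the outer face, and the containment relation among the connected components of $G$, an $H$-compatible embedding is fully determined by the choice, for each valve $\omega$, of a total order $<_\omega$ on the edges crossing $\omega$. Such a choice yields a valid embedding if and only if (i) the orders at the two valves of every pipe are mutual reverses, and (ii) at each cluster $\nu_i$ the prescribed orderings at its incident valves extend to a planar embedding of $G$ restricted to $N_\varepsilon(\nu_i)$ that respects the fixed rotations at internal vertices and the fixed containments among its components.

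For the local condition (ii), I would build for each cluster $\nu_i$ an auxiliary planar multigraph $G_i$: restrict $G$ to $V_{\nu_i}$, replace each edge leaving the cluster by a stub, and add a subdivision of the boundary cycle $\partial N_\varepsilon(\nu_i)$ with one subdivision vertex per valve, attaching each stub to the subdivision vertex of the valve through which the corresponding edge exits. Condition (ii) then becomes a planarity question for $G_i$ with prescribed rotations at internal vertices and a prescribed cyclic order of valves on the boundary. This reduces to a PQ-tree (or PC-tree) test, whose output is a compact description of the set of orderings $<_\omega$ admissible inside $\nu_i$.

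For the global condition (i), I would propagate the cluster-wise PC-tree descriptions along the pipes of $H$: at each pipe, the PC-tree at one valve is intersected with the reverse of the PC-tree at the opposite valve, and any inconsistency aborts the procedure. When the iterated intersections stabilize to a nonempty PC-tree at every valve, any consistent leaf order yields an $H$-compatible embedding of $G$, which can then be drawn concretely inside $\mathcal{H}$.

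The main obstacle I anticipate is clusters for which $G[V_{\nu_i}]$ is disconnected, because the fixed isotopy class then imposes inter-component containment constraints that are not captured by the internal rotation system alone. I would encode these constraints by introducing auxiliary edges connecting the components in a way forced by the containment relation, thereby reducing to the connected case without destroying the PC-tree characterization. The $O(|V(G)|^2)$ bound on the overall running time would then follow by standard amortized analysis of PC-tree operations, combined with a charging argument accounting for the propagation across the pipes of $H$.
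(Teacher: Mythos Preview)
Your proposal identifies the right \emph{static} reformulation --- an $H$-compatible embedding is indeed determined by a family of valve orders satisfying the pipe-reversal constraint and a local planarity constraint at each disc --- but it does not contain a working algorithm, and the gap is precisely where the paper invests all of its effort.

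First, the local description is incomplete. After bringing the instance to normal form each cluster $V_{\nu_i}$ is an independent set, so your auxiliary graph $G_i$ is a boundary cycle together with a collection of stars with fixed rotations. The set of admissible orderings at a single valve $\omega$ of $\nu_i$ is \emph{not} an independent PC-tree: the orderings at the different valves of $\nu_i$ are coupled through the stars, so what you would need is a joint representation of all tuples $(<_\omega)_{\omega}$ realisable inside the disc. No such compact representation is proposed, and a per-valve PC-tree loses exactly the information that makes the problem nontrivial.

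Second, and more seriously, the propagation step is where the real difficulty lies, and you have not shown that it terminates, let alone in $O(|V(G)|^2)$ time. The obstruction is what the paper calls a \emph{spur}: a vertex $v\in V_{\nu_i}$ all of whose neighbours lie in a single cluster $V_{\nu_j}$. The rotation at $v$ fixes only the cyclic order of its incident edges at the valve of $\nu_i\nu_j$, leaving a binary choice of linear order; chains of spurs (which already occur when $G$ is a path) produce long-range dependencies that force information to travel across the whole graph before any valve order is pinned down. Your ``intersect and iterate'' scheme gives no mechanism for bounding the number of rounds, nor any argument that the admissible sets remain PC-trees under iteration. The paper deals with spurs by an entirely different device: a global \emph{derivative} operator on instances that strictly decreases the potential $p(G,H,\gamma)=|E(G)|-|E(H)|$ whenever the instance is not locally injective (Lemma~\ref{lemma:running}), so that after at most $|E(G)|$ derivatives one reaches an instance with no spurs, which is then solved directly (Lemmas~\ref{lemma:badCycle} and~\ref{lemma:relations}). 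Nothing in your outline plays the role of this potential argument.

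Finally, your treatment of disconnected clusters (``introduce auxiliary edges forced by the containment relation'') is too vague to evaluate; the containment relation is a global datum, and it is not clear which edges you would add or why planarity of the augmented graph is equivalent to the original constraint.
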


As a corollary of our result we obtain that we can test in  polynomial time if a piecewise linear drawing of a graph in the plane is approximable by an embedding and construct such an embedding if it exists. 
We defer the definition of the approximability by an embedding to Section~\ref{sec:approx},
where also the proof of the corollary can be found.
As previously discussed this extends results in~\cite{AAET16,ChEX15} and also~\cite{S03}. 

\begin{corollary}
\label{cor:main}
There exists  an $O(n^4)$ time algorithm that decides if
 a piecewise linear drawing of a graph in the plane is approximable by an embedding, and constructs such an embedding if it exists, where $n$ is the size of the representation of the drawing.
\end{corollary}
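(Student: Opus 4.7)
The plan is to reduce approximability of a piecewise linear drawing to c-planarity with embedded pipes and then invoke Theorem~\ref{theorem:main}. Given a piecewise linear drawing $D$ of a graph $G$ of representation size $n$, I first construct the \emph{skeleton} $H$: the planar straight-line graph whose vertex set is the union of the images of the vertices of $G$ under $D$, the endpoints of the line segments in $D$, and all pairwise intersection points of these segments, and whose edge set consists of the maximal relatively open sub-segments between consecutive such vertices. Since $D$ consists of $O(n)$ segments, $H$ has $|V(H)|, |E(H)| = O(n^2)$ and, being straight-line embedded, satisfies hypothesis~\textbf{(B)}.

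Next, I construct a subdivided graph $G'$ by tracing, for every edge $e \in E(G)$, the image $D(e)$ as a walk in $H$ and inserting a new degree-two vertex of $G'$ at each interior vertex of $H$ visited by the walk. Every vertex of $G'$ is assigned to the cluster $V_\nu$ indexed by the $H$-vertex at which it is drawn. The rotation at an original vertex of $G$ is read off $D$; at subdivision vertices the rotation is trivial. The outer face and the component-containment relation are likewise read off $D$. This data jointly specifies an isotopy class $\mathcal{I}$ of $G'$.

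I then apply Theorem~\ref{theorem:main} to $(G', H, \mathcal{I})$. The correctness claim is that $D$ is approximable by an embedding if and only if $\mathcal{I}$ admits an $H$-compatible embedding inside $\mathcal{H}$ for suitably chosen $\varepsilon' < \varepsilon$. In one direction, a sufficiently small perturbation of $D$ to an embedding lies in $\mathcal{H}$, places every vertex of $G'$ in the correct $N_\varepsilon(\nu)$, and routes each subdivision edge through the pipe of the corresponding $H$-edge, which is exactly $H$-compatibility in isotopy class $\mathcal{I}$. Conversely, an $H$-compatible embedding in $\mathcal{I}$, after contracting subdivisions, can be rescaled by shrinking $\varepsilon, \varepsilon'$ to an arbitrarily small perturbation of $D$.

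For the running time, each segment of $D$ participates in at most $O(n)$ intersections and so contributes at most $O(n)$ subdivision vertices, giving $|V(G')| = O(n^2)$; Theorem~\ref{theorem:main} then runs in $O(|V(G')|^2) = O(n^4)$. The principal obstacle will be the correctness proof in the presence of degeneracies---overlapping edge segments, edges passing transversally through vertices of $G$, and coincident vertices of $G$---each of which must be shown to be faithfully encoded by the cluster assignment and isotopy class of $G'$. In particular, whenever $D$ forces several edges of $G'$ to exit a common vertex through the same valve, one must argue that the latitude in their local rotation is fully absorbed by the algorithm's freedom in choosing the orders $<_\omega$ on the valves, so that a single canonical isotopy class $\mathcal{I}$ suffices and no exponential enumeration over tie-breaking at degenerate incidences is needed.
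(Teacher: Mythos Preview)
Your reduction to Theorem~\ref{theorem:main} is the same one the paper carries out in Section~\ref{sec:approx}: take $H$ to be the image of the drawing, subdivide $G$ along it, set $\gamma(v)=\mathcal{D}(v)$, and apply the theorem. The $O(n^2)$ bound on $|V(G')|$ and the resulting $O(n^4)$ running time also match (the paper attributes the quadratic blow-up to subdividing at ``forks'' rather than at general segment intersections, but the arithmetic is the same, and your inclusion of transversal crossing points is harmless since the two degree-two subdivision vertices placed in the same disc then force a crossing and the instance is correctly negative).

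Where you diverge from the paper is in handling the isotopy class. The paper does not try to read it off the degenerate drawing; it explicitly works ``in a fixed isotopy class of $G$'' supplied with the input (first sentence of the reduction in Section~\ref{sec:approx}), so your ``principal obstacle'' simply does not arise in the paper's setting. Your proposed resolution of that obstacle is also not correct as stated: the rotations are part of the \emph{input} isotopy class to Theorem~\ref{theorem:main}, and the valve orders $<_\omega$ are what the algorithm \emph{computes}, not free parameters that can absorb a different choice of rotation. Showing that one canonical tie-breaking suffices would require a separate argument (local re-embedding at degenerate vertices), which the paper does not make. A smaller gap: in the direction ``approximable $\Rightarrow$ positive'' you assert that a small perturbation is automatically $H$-compatible, but an arbitrary $\varepsilon'$-approximating embedding may wander in and out of a disc $N_\varepsilon(\nu_i)$ and cross a valve several times; the paper closes this with an explicit lens-removal deformation (Figure~\ref{fig:lensremoval}) before concluding $H$-compatibility.
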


{\bf Extensions of our results.}
%We think that our algorithm runs in quadratic time
%rather than only cubic time in the number of vertices of $G$, but we fell short of proving this.
By~\cite[Theorem 3.1]{M06}, our result holds also in the setting of  rectilinear, i.e., straight-line, drawings of graphs. To extend it further in this setting by allowing ``forks'' (see Section~\ref{sec:approx}) seems to be just a little bit technical.

%
%For the case when $G$ is a cycle, the approach of~\cite{CDPP09} extends easily to the surfaces of higher genus~\cite[Section D.3]{ChEX15}. This does not seem to be the case for
%our proof. In particular, in the proof of our characterization, Section~\ref{sec:char}, we crucially rely on the fact that a curve corresponding to a cycle in $G$ that is not ``local'', i.e, induced by a single cluster, is separating, i.e., its removal from the surface splits the surfaces into two connected components. We note that this follows from 
%Jordan-Sch\"onflies theorem.

In a recent manuscript~\cite{FK17+}, we verified a conjecture of M.~Skopenkov~\cite[Conjecture 1.6]{S03} implying that that our problem is tractable, when we lift the restriction on the isotopy class $G$.  This does not imply that the problem with the restriction  on the isotopy class $G$ is tractable except when $G$ is connected. The running time of the algorithm, that is implied by this work,
is $O(|V|^{2\omega})$, where $O(n^\omega)$ is the running time of the fastest algorithm for
multiplying a pair of $n$ by $n$ matrices.
 Since $\omega> 2$ due to the matrix size, this is much worse that the running time claimed by Theorem~\ref{theorem:main}. Furthermore, the algorithm is not constructive.

As noted by Chang et al.~\cite{ChEX15}, the technique of Cortese et al.~\cite{CDPP09} extends directly  from the plane to any closed two-dimensional surface.  The same holds for our method, but since considering
general two-dimensional surfaces does not bring anything substantially new to our treatment of the problem, for the sake of simplicity  we consider only the planar case. \\

 {\bf Strategy of the proof of Theorem~\ref{theorem:main}.}
Formally, the \emphh{input} of our algorithm is a triple $(G,H,\gamma)$,
where the partition of the vertex set of $G$ corresponds to the map $\gamma$ from the set of vertices  of $G$ to the set of vertices of $H$. Hence,  for $v\in V_{\nu}$, where $\nu\in V(H)$, we have  $\gamma(v)=\nu$. The input $(G,H,\gamma)$ is \emphh{positive} if there exists an $H$-compatible
embedding of $G$ in the given isotopy class of $G$, and \emphh{negative} otherwise.

The most problems in constructing a polynomial time algorithm  for our problem
are caused by so called ``spurs'' such as the red vertex in Fig.~\ref{fig:problem} (left), i.e., connected components in subgraphs of $G$ induced by clusters, whose all adjacent
vertices belong to the same cluster.
Due to the presence of spurs it is hard to see that our problem is tractable even in the case,
when $G$ is a path.

The centerpiece of our method is an extension of the definition of the derivative of maps of intervals/loops (corresponding to the case, when $G$ is a path/cycle, in our terminology) in the plane introduced by Minc~\cite{M97} to arbitrary graphs. 
  We adapt this notion to the setting of c-planarity with embedded pipes.
The derivative is an operator that takes $(G,H,\gamma)$, and either detects that 
there exists no $H$-compatible embedding of $G$ in the given isotopy class of $G$, or outputs $(G',H',\gamma')$, that  is also a valid input for our algorithm, such that $(G,H,\gamma)$ is positive
if and only if   $(G',H',\gamma')$ is positive.
Intuitively, $H'$ is reminiscent of the line graph of $H$ and the subgraphs of $G$, that are mapped by $\gamma$ to the edges of $H$, are turned into subgraphs of $G'$ mapped by $\gamma'$ into vertices of $H'$. This results in a shortening of problematic spurs, and zooming into the structure of the map $\gamma$.
We show that by iterating the derivative $|E(G)|$ times we either detect 
that there exists no $H$-compatible embedding of $G$ in the given isotopy class of $G$, or
we arrive at an input without problematic spurs.
Since it is fairly easy to solve the problem for the latter inputs; the derivative at every iteration
can be computed in linear time in $|V(G)|$; and by derivating the size of the input is increased only by a little,  the tractability follows.

The operation of node expansion and base contraction introduced in~\cite{CDPP09} resemble the derivative. The main difference is that these two operations affect only a single cluster or a pair of clusters in $(G,H,\gamma)$, and therefore they are local, whereas  the derivative changes the whole input.
We are very positive that our method is applicable to other graph drawing problems related to c-planarity whose tractability is open.
This is documented by our recent manuscript~\cite{FK17+} in which 
a similar technique was applied.

The derivative is applied to an input $(G,H,\gamma)$, in which every cluster $V_{\nu_i}$ induces
in $G$ an independent set. Such an input is in the \emphh{normal form}.
The detailed description of the algorithm proving Theorem~\ref{theorem:main} is in Section~\ref{sec:algorithm}.
We show in Section~\ref{sec:instances} that an input can be assumed to be in the normal form.
The definition of the  derivative is given in Section~\ref{sec:derivative},
and sufficiently simplified inputs are dealt with in Section~\ref{sec:loc_inj}. \\
%In Section~\ref{sec:pre}, we introduce notation and necessary notions.

%The combinatorial formulation of the problem corresponds to the setting of \emphh{c-planarity
%with pipes} introduced by Cortese et at~\cite{CDPP09}.
%Therein the polynomial-time algorithm is given in the case when $G$ is a cycle.
%The time-complexity of the algorithm was later improved by Chang et al.~\cite{ChEX15}
%and~\cite{AAET16}.
%Our algorithm answers in the affirmative a question posed in the arxiv version of~\cite{ChEX15} if the isotopy class of an embedding of $G$ is fixed.  
%In Lemma~\ref{lemma:cortese}, we also partially answer a related question of G\"unter Rote (see~\cite[Section 8.4]{ChEX15}) on the uniqueness of an $H$-compatible embedding for cycles.

%Thus, we consider only clustered planar drawings of $G$ in which
%clusters are represented by the vertices of $H$ (which are now discs),
%and each edge is contained in the union of its supporting pipe and discs
%containing its end vertices.
%Hence, we  test if $G$ can be ``embedded into'' $H$.
% Note that in the previous Section~\ref{sec:alg} we considered a special case
% in which $H$ was a triangle.

%In Section~\ref{sec:instances}, we pre-processes instances into a normal form.

\section{Preliminaries}

\label{sec:pre}

Throughout the paper we tacitly use  Jordan-Sch\"onflies theorem.

Let $G=(V,E)$ denote a   planar graph possibly with multiple edges and loops.
 For $V'\subseteq V$ we denote by $G[V']$ the sub-graph of $G$ induced by $V'$. 
A \emphh{star} ${St(v)}$ of a vertex $v$ in a graph $G$ is the subgraph of $G$ 
consisting of all the edges incident to $v$.
Throughout the paper we use standard graph theoretical notions such as path, cycle, walk, vertex degree $deg(v)$ etc., see~\cite{D10}.

A \emphh{drawing}  $\mathcal{D}(G)$ is a representation of $G$ in the plane, where every vertex
 in $V$ is represented by a  point and every
edge $e=uv$ in $E$ is represented by a simple piecewise linear curve joining the points that represent $u$ and $v$.
 Thus, a drawing can be thought of as a map from $G$ understood as a topological space into the plane.
In a drawing, we additionally require every pair of distinct curves representing edges to meet only 
in finitely many points each of which is a proper crossing or a common endpoint. In a \emphh{degenerate} drawing, we allow a pair of distinct vertices to be represented by the same point and a pair of edges to be represented by the same  curve. 
A drawing in which every vertex is represented by a unique point and 
every edge by a unique curve is \emphh{non-degenerate}. 
In a non-degenerate drawing, multiple edges are mapped to distinct arcs meeting at their endpoints.
In the paper we consider non-degenerate drawings,
except in Section~\ref{sec:approx}.
An  edge crossing-free non-degenerate drawing is an \emphh{embedding}. A graph given by an embedding in the plane
is a  \emphh{plane graph}.
If it leads to no confusion, we do not distinguish between
a vertex or an edge and its representation in the drawing and we use the words ``vertex'' and ``edge'' in both
 contexts.
 
 The following lemma is well known.
 \begin{lemma}
 \label{lemma:linear}
 Let $G$ be a plane graph with $n$ vertices 
 such that   $G$ does not contain a pair of multiple edges joining the same pair of vertices 
that form a face of size two, i.e., a lens,
except for the outer face. 
The graph $G$ has $O(n)$ edges.
 \end{lemma}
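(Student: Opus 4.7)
The plan is to deduce the bound from Euler's formula after lower-bounding face sizes, exactly as in the classical argument $m\le 3n-6$ for simple planar graphs, modified to accommodate the multigraph setting via the lens-free hypothesis. First I reduce to the connected case: if $G$ has connected components $G_1,\dots,G_c$, each $G_i$ inherits the lens-free hypothesis (with the unbounded face of the component containing $G$'s outer face playing the role of the outer face), and summing the component-wise bounds $|E(G_i)|\le 3|V(G_i)|-O(1)$ loses only an additive $O(c)=O(n)$. So I may assume $G$ is connected, with $n$ vertices, $m$ edges, and $f$ faces, satisfying $n-m+f=2$.

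Next I lower-bound face sizes. By the lens-free hypothesis every interior face has facial walk of length at least $3$: a face of length $2$ is exactly a lens, which is forbidden in the interior. (I may additionally assume $G$ is loop-free, since the plane graphs arising from the derivative construction in the applications of this lemma do not introduce loops; alternatively, loops can be stripped off and re-added at the end, contributing only $O(1)$ per loop to both $n$ and $m$.) The outer face has size at least $1$, and if it is itself a $2$-face then $G$ must be the trivial graph on two vertices joined by two parallel edges, for which the bound is immediate. Using the standard identity $2m=\sum_{\phi}|\phi|$, summing over all faces gives $2m\ge 3(f-1)+1 = 3f-2$, hence $f\le(2m+2)/3$, and substituting into Euler's formula yields $m\le 3n-4$, which is $O(n)$.

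The main (and only) subtlety is the face-size analysis: the entire content of the statement is encoded in the lens-free condition, whose sole purpose is to rule out the $2$-faces that would otherwise allow $m$ to be arbitrarily large relative to $n$ (as witnessed by arbitrarily many parallel edges on two vertices). Once every interior face has at least three boundary edges, the conclusion follows from the classical planar counting identity applied to the multigraph setting, with the trivial base case and loop handling absorbed into the $O(1)$ slack.
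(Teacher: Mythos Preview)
The paper does not prove this lemma; it is introduced as ``well known'' and no argument is given. Your proof via Euler's formula is the standard one and is correct.

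One small correction: your parenthetical alternative ``loops can be stripped off and re-added at the end, contributing only $O(1)$ per loop to both $n$ and $m$'' is not right. A loop contributes $1$ to $m$ and $0$ to $n$, and in fact the lemma as literally stated fails if loops are permitted: a single vertex with $k$ loops arranged as petals (rotation $h_1,h_1',h_2,h_2',\dots,h_k,h_k'$) has faces of sizes $1,\dots,1,k$, so for $k\ge 3$ there is no inner $2$-face and the hypothesis is satisfied, yet $m=k$ while $n=1$. Your first alternative---assuming loop-freeness because the applications in the paper always delete loops when bringing an instance to normal form---is the correct way to handle this, and with that assumption your face-size count and the resulting bound $m\le 3n-4$ go through as written.
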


The \emphh{rotation} at a vertex in an embedding of $G$ is the counterclockwise cyclic order of the end pieces of its incident edges.
The rotation at  a vertex is stored as a doubly linked list of edges. Furthermore, we assume that for every edge of $G$ we store a pointer to its preceding and succeeding edge in the rotation at both of its end vertices.
The \emphh{interior} and \emphh{exterior} of a cycle in an embedded graph is the bounded and unbounded, respectively, connected component
of its complement in the plane. 
Similarly, the \emphh{interior} of an inner face and outer face in an embedded
connected graph is the bounded and unbounded, respectively, connected component
of the complement of its facial walk in the plane bounded by the walk.
An embedding of a connected graph $G$ is up to an isotopy described by the rotations at its vertices and the choice of its outer (unbounded) face.
If $G$ is not connected the isotopy class
of its embedding is described by isotopy classes
of its connected components $G_1,\ldots, G_l$
and the containment relation $G_i \subset f$, for every $G_i$, where
$f$ is a face of $G_j$, $j\not= i$, such that $G_i$ is embedded in the interior of $f$.

%We call such a description of an embedding of $G$ 
%a \emphh{combinatorial embedding}.
%Let $F$ denote a set of faces in an embedding. We let $G[F]$ denote the subgraph of $G$ induced by the edges incident to the faces of $F$.

\section{Approximation of maps by embeddings}
\label{sec:approx}

The aim of this section is to derive Corollary~\ref{cor:main} from Theorem~\ref{theorem:main}.
By treating a graph $G$ as a one-dimensional topological space, a drawing $\mathcal{D}$
of $G$ is understood as a continuous map $\mathcal{D}$ mapping every $x\in G$ to $\mathbb{R}^2$.
Such a drawing is given by the finite set of pairs of real values representing the end points
of line segments of  polylines corresponding in the drawing to edges of $G$.

Let $G=(V,E)$ denote a planar graph. Let $\mathcal{D}$ be a (possibly degenerate) drawing corresponding of $G$.
Note that we do not 
allow an edge to pass through a vertex by the definition of the drawing, or in other words, we do not allow a drawing to contain \emphh{forks}~\cite{ChEX15}. 
 The previous restriction is not crucial, since we can subdivide edges at ``fork'' vertices while still having an input of a quadratic size in the size of the original input.
 This yields the claimed running time.
An \emphh{$\epsilon$-approximation} of a drawing  $\mathcal{D}$ of a graph $G$ is a drawing $\mathcal{D}'$ of $G$ such that $\mathrm{dist}(\mathcal{D}(x),\mathcal{D}'(x))<\epsilon$,
for all $x\in G$.
 A drawing $\mathcal{D}$ is \emphh{approximable by an embedding} if  there exists 
$\varepsilon(\mathcal{D})>0$ such that
for every  $\varepsilon$, $0<\varepsilon<\varepsilon(\mathcal{D})$, there exists an $\varepsilon$-approximation $\mathcal{D}$ that is an embedding. 
It is clear that a  drawing with edge crossings is not 
approximable by an embedding, and thus, in the sequel we consider only drawings without edge crossings.

Given a drawing $\mathcal{D}$ of a graph $G$ in the plane, in order to decide if $\mathcal{D}$ is approximable by an embedding in a fixed isotopy class of $G$, we construct an  input $(G_0,H,\gamma)$ for c-planarity with embedded pipes. The graph $H$ is the embedded graph given by the image of $\mathcal{D}$, and $G_0$ is obtained from $G$ by subdividing every edge $e$, that is not a drawn as a straight-line segment by $\mathcal{D}$, so that $\mathcal{D}$ is turned into a straight-line drawing of $G_0$.  Then every end point of a line segment  representing an   edge of $G_0$ is turned into a vertex of $H$. We put $\gamma(v):=\mathcal{D}(v)$ for $v\in V(G_0)$.

The input $(G_0,H,\gamma)$ is positive if and only if  $\mathcal{D}$ is approximable by an embedding. 
The ``only if'' direction is easy.
If $(G_0,H,\gamma)$ is positive, then   we can choose $\varepsilon(\mathcal{D}):=\varepsilon$, where $\varepsilon$ is as in the definition of the thickening of $H$, witnessing that  $\mathcal{D}$ is approximable by an embedding.
Let $\varepsilon'$ be as in the definition of the thickening of $H$.
To prove the ``if'' direction, it is enough to show that an $\varepsilon'$-approximation of $\mathcal{D}$, that is an embedding, can be chosen such that for every   
$\nu_i\nu_j\in E(H)$,  $v_i\in V_{\nu_i}$ and $v_j\in V_{\nu_j}$ we have $|\omega \cap \mathcal{D}(v_iv_j)|\leq 1$, where $\omega$ is a valve of $\nu_i\nu_j\in E(H)$.
This can be achieved 
by an appropriate local deformation of the $\varepsilon'$-approximation as we show next.

\begin{figure}
\centering
\subfloat[]{
\includegraphics[scale=1]{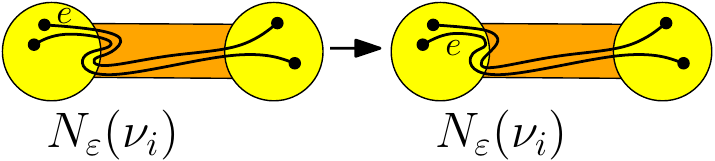}\label{fig:lensremoval}}
%\hspace{2pt}
\subfloat[]{
\includegraphics[scale=1]{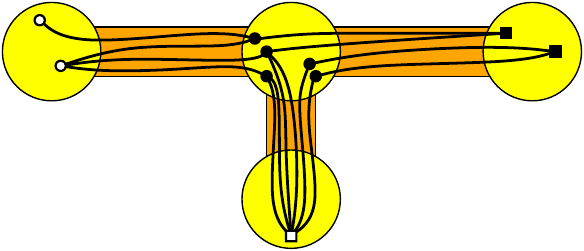}\label{fig:normal}}
\caption{(a) Deforming an approximation so that every valve is crossed by an edge at most once.
(b) The input $(G^N,H^N,\gamma^N)$ in the normal form constructed from $(G,H,\gamma)$ in Fig.~\ref{fig:problem}.}
\end{figure}

Suppose that a valve $\omega$ at, let's say $\nu_i$, crosses an edge at least two times in  the $\varepsilon'$-approximation. We consider a pair of consecutive crossings with $\omega$ along an edge $e$ such that the piece of $e$ between the  crossings in the pair is contained in the pipe. We choose the pair so that the distance between the crossings is minimal, and eliminate the crossings as illustrated in Fig.~\ref{fig:lensremoval}.
By repeating this procedure, we eventually obtain an $H$-compatible
embedding of $G_0$.

\section{Proof of Theorem~\ref{theorem:main}}
\label{sec:algorithm}

Let $(G,H,\gamma)$ be the input of our algorithm,
where the partition of the vertex set of $G$ corresponds to the map $\gamma$ of the vertices  of $G$ by vertices of $H$. Recall that  for $v\in V_{\nu}$ we have  $\gamma(v)=\nu$,
where $\nu\in V(H)$. 
We also naturally extend $\gamma$ to edges: $\gamma(v_iv_j)=\rho=\nu_i\nu_j$, for $v_i\in V_{\nu_i}$ and $v_j\in V_{\nu_j}$, and to subgraphs $G_0$ of $G$: $\gamma(G_0)=H_0=(V(H_0),E(H_0))$ such that $V(H_0)=\{\nu\in V(H)| \gamma(v)=\nu, \ v\in V(G_0 )\}$ and $E(H_0)=\{\rho\in E(H)|  \ \gamma(e)=\rho, e\in E(G_0) \}$.

A vertex $\nu\in V(H)$ of degree two  is \emphh{redundant} if $V_{\nu}\subseteq V(G)$ is an independent set  consisting of vertices of degree two such that for every $v\in V_{\nu}$ we
have $\gamma(vu)\not=\gamma(vw)$, where $u$ and $w$ are the two neighbors of $v$.
We assume that every edge of $H$ is used by at least one edge of $G$, i.e., for every $\rho\in E(H)$ there exists $e\in E(G)$ such that $\gamma(e)=\rho$.

\subsection{The normal form} 
\label{sec:instances}

Similarly as in~\cite{Fb16}, the input $(G,H,\gamma)$ is in the \emphh{normal form} if
\begin{enumerate}[(1)]
\item
\label{it:ind}
every cluster $V_{\nu}\subseteq V(G)$, for $\nu\in V(H)$, is an independent set without isolated vertices; and
\item
\label{it:supp}
$H$ does not contain a pair of redundant vertices joined by an edge.
\end{enumerate}

We remark that~(\ref{it:supp}) is required only due to the running time analysis. Then we do not forbid redundant vertices  completely, since we do not allow $H$ to contain multiple edges.
 In what follows we show 
how to either detect that no $H$-compatible embedding in the given isotopy class of $G$
exists just by considering the subgraph of $G$ induced by a single cluster $V_{\nu}$,
or construct an input $(G^N,H^N,\gamma^N)$, see Fig.~\ref{fig:normal}, in the normal form, which is positive if and only if
the input $(G,H,\gamma)$ is positive.
Clearly,~(\ref{it:supp}) can be assumed without loss of generality. Before establishing the other condition we introduce a couple of definitions. 

A \emphh{contraction} of an  edge $e=uv$ in an embedding of a graph is an operation that turns
$e$ into a vertex
by moving $v$ along $e$ towards $u$ while dragging all the other edges incident to $v$ along $e$.
By a contraction we can introduce multiple edges or loops at the vertices.
We will also  use the following operation which can be thought of as the inverse operation of the edge contraction in an embedding of a graph.
Note that a contraction can be carried out in $O(1)$ time, since it amounts to merging a pair of doubly linked lists, and redirecting at most four pointers. The same applies to the following operation.
A \emphh{vertex split}, see Fig.~\ref{fig:splits}, in an embedding of a graph $G$ is an operation that replaces a vertex $v$ by two vertices $u$ and $w$
joined by a crossing free edge so that the neighbors of $v$ are partitioned into two parts
according to whether they are joined with $u$ or $w$ in the resulting drawing. The rotations at $u$ and $w$ are inherited from the rotation at $v$.
When applied to $G$, the operations are meant to return a graph given by an isotopy class of its embedding; the same applies to 
vertex multisplit defined later.

In order to satisfy~(\ref{it:ind}), by a series of successive edge contractions we contract each connected component of $G[V_{\nu_i}]$, for $\nu_i\in V(H)$, to a vertex.
Since rotations are stored as doubly linked lists, contracting all such connected components can be carried out in linear time.
We delete any created loop and isolated vertices.
If a loop at a vertex from $V_{\nu_i}$ contains a vertex from a different cluster $V_{\nu_j}$, $\nu_j\not=\nu_i$, in its interior we know that the 
input is negative, since for every $\nu_j$ all the vertices in $V_{\nu_j}$ must be contained in the outer face of $G[V_{\nu_i}]$ if the input is positive. All this can be easily checked in  linear time in $|V(G)|$ by the the breadth-first
or depth-first search algorithm.
If a loop at a vertex from $V_{\nu_i}$ does not contain  a vertex from a different cluster, such a contraction preserves the existence of an $H$-compatible embedding in the given isotopy class of $G$. Indeed, isolated vertices and
deleted empty loops can be introduced in an $H$-compatible embedding of the resulting  graph, and contracted edges recovered via vertex splits.
Let $(G^N,H^N,\gamma^N)$ denote the resulting input in the normal form. We  proved the following.

\begin{lemma}
\label{lemma:normal}
If a loop at a vertex of $G$ obtained during the previously described procedure  contains a vertex of $G$ in its interior the input $(G,H,\gamma)$ is negative.
Otherwise, the input $(G,H,\gamma$) is positive if and only if $(G^N,H^N,\gamma^N)$ is positive.
\end{lemma}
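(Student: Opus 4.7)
The plan is to prove both statements by realising the edge contractions and their inverses (vertex splits together with the re-insertion of empty loops and isolated vertices) locally inside the disks $N_\varepsilon(\nu_i)$, thereby translating between $H$-compatible embeddings of $G$ and of $G^N$.

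For the first assertion I would argue by contraposition. Suppose $(G,H,\gamma)$ is positive, witnessed by an $H$-compatible embedding $\mathcal{E}$ in the prescribed isotopy class. Because the neighbourhoods $N_\varepsilon(\nu_i)$ are pairwise disjoint and each cluster $V_{\nu_i}$ lies inside $N_\varepsilon(\nu_i)$, every connected component $C$ of $G[V_{\nu_i}]$ is drawn inside the disk $N_\varepsilon(\nu_i)$, and that disk contains no vertex of any other cluster. The successive edge contractions that turn $C$ into a single vertex can be performed in $\mathcal{E}$ by sliding each edge along its own curve inside $N_\varepsilon(\nu_i)$; after each contraction the drawing remains an embedding in the same isotopy class. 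Any loop created during this process therefore lies entirely inside $N_\varepsilon(\nu_i)$, hence bounds a region in $\mathcal{E}$ contained in $N_\varepsilon(\nu_i)$, which is free of foreign vertices. So if the given isotopy class puts a vertex of some $V_{\nu_j}$ ($j\neq i$) in the interior of such a loop, no $H$-compatible embedding in that isotopy class can exist, i.e.\ $(G,H,\gamma)$ is negative.

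For the second assertion I would argue both directions by the same local surgery. The forward direction reuses the above: starting from an $H$-compatible embedding $\mathcal{E}$ of $G$, perform all contractions inside the disks $N_\varepsilon(\nu_i)$, and delete the loops and isolated vertices that arise. By the hypothesis each such loop bounds a region containing no vertex of $G$, so its deletion does not alter containment relations among connected components, and the resulting drawing is an $H$-compatible embedding of $G^N$ in the isotopy class encoded by $\gamma^N$. Conversely, given an $H$-compatible embedding $\mathcal{E}^N$ of $G^N$, I would invert the procedure. The contracted vertex representing a component $C$ of $G[V_{\nu_i}]$ lies in the interior of $N_\varepsilon(\nu_i)$, so there is free room around it inside the disk; applying the reverse sequence of vertex splits in this free region and re-inserting the empty loops and isolated vertices there recovers $G$. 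Since the splits are dictated by the rotations stored by $\gamma$ and performed in a region that does not interfere with any other part of the drawing, the result is an $H$-compatible embedding of $G$ in the prescribed isotopy class.

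The step I expect to require the most care is the reverse direction of the second assertion: checking that each vertex split can be realised inside the free portion of $N_\varepsilon(\nu_i)$ so that the prescribed rotation at the two resulting vertices, as well as the containment relations among the connected components of $G$ embedded inside various faces of $G[V_{\nu_i}]$ or enclosed by the re-introduced loops, are exactly reproduced. This reduces to a purely local combinatorial check: a vertex split in an embedded graph can realise any prescribed partition of the rotation together with any prescribed assignment of the incident faces, and the re-inserted loops and isolated vertices can be placed into whichever face of the enlarged graph is specified by the isotopy class, precisely because no such loop is required to contain a foreign vertex.
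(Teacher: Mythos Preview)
Your proposal is correct and follows essentially the same approach as the paper: the paper's proof is the short paragraph immediately preceding the lemma (``isolated vertices and deleted empty loops can be introduced in an $H$-compatible embedding of the resulting graph, and contracted edges recovered via vertex splits''), and your plan is a more detailed version of exactly this local-surgery argument inside the disks $N_\varepsilon(\nu_i)$. Your identification of the reverse direction (reproducing rotations and containment via splits) as the point needing the most care is apt, though the paper treats it as routine.
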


%Thus, w.l.o.g. we can assume that $V_{\nu_i}$'s of $G$ form independent sets.
%Without loss of generality  we also assume that
%\emphh{$H$ is connected.}
%and  that $\gamma$ is edge-wise surjective, i.e., every edge of $H$ is used by at least one edge of $G$.

%
%An instance $(G,H,\gamma)$ is in the \emphh{normal form} if 
%$H$ is connected, $\gamma$ is edge-wise surjective, $V_{\nu_i}$ forms an independent set for every $\nu_i\in V(H)$.

\subsection{Derivative}
\label{sec:derivative}

We present the operation of the derivative that simplifies the input, and whose
iterating results in an input that is easy to deal with.
 Such inputs are treated in Section~\ref{sec:loc_inj}.
Before we describe the derivative we give a couple of definitions.

A \emphh{vertex multisplit}, see Fig.~\ref{fig:splits}, in an embedding of a graph $G$ is an operation that replaces a vertex $v$ with a crossing free star $(\{v,v_1,\ldots, v_l\}, \{vv_1,\ldots, vv_l \})$   so that
 the resulting underlying graph has vertex set $V(G)\cup \{v_1,\ldots, v_l\}$ and edge set $(E(G)\setminus \{vu_1,\ldots ,vu_{deg(v)}\})\cup \{v_{i_j}u_j| \ j=1,\ldots, deg(v)\}\cup
 \{vv_1,\ldots, vv_l \}$, where $u_1,\ldots, u_{deg(v)}$
  are neighbors of $v$ in $G$ and  $1\le i_j \le l$, for all $j$.
The rotations at $v_1,\ldots,v_l$ are inherited from the
rotation at $v$ so that by contracting all the edges of $St(v)$ in the resulting 
graph we obtain the original embedding of $G$.
Note that a vertex multisplit can be carried out in $O(deg(v))$ time.

The rotation of $\nu\in V(H)$ is \emphh{consistent} with the rotation of $v\in V_{\nu}$ if
the rotation given by ($\gamma(vv_1),\ldots, \gamma(vv_{deg(v)}))$, where $(vv_1,\ldots, vv_{deg(v)})$
is the rotation at $v$ in an embedding of $G$ in the given isotopy class, is the rotation at $\nu\in V(H)$ in the embedding of $H$.
Let the \emphh{potential} $p(G,H,\gamma)=|E(G)|-|E(H)|$.
Obviously, $p(G,H,\gamma)\ge 0$ and $p(G,H,\gamma)= 0$,
if $G$ is isomorphic to $H$ via $\gamma$,
and if $G$ is connected the opposite implication also holds.
The input in the normal form $(G,H,\gamma)$ is \emphh{locally injective} if

\begin{enumerate}[(i)]
 \item \label{it:li}
 the restriction of $\gamma$ to $V(St(v))$ is injective, for all $v\in V(G)$; and
\item \label{it:v32} every vertex $v$ of degree one
in $G$ is incident to an edge $e$ such that $\gamma(e)=\gamma(f)$ implies $e=f$ for all $f\in E(G)$. 
\end{enumerate}

Given an input $(G,H,\gamma)$, the vertex $v\in V(G)$ is \emphh{fixed}  if the condition of property~(\ref{it:li}) holds 
for $v$, and $v$ is alone in its cluster, i.e.,
$\gamma(u)=\gamma(v)$ implies $u=v$.
If $v$ is fixed then we call $\gamma(v)=\nu_i \in V(H)$ also \emphh{fixed}.

Given an input $(G,H,\gamma)$ in the normal form that is not locally injective, we either detect that there does not exist an $H$-compatible embedding of $G$ in the given isotopy class, or we construct the input $(G',H',\gamma')$ having a smaller potential after being brought to the normal form, such that
$(G',H',\gamma')$ is positive if and only if  $(G,H,\gamma)$
is positive. 
The input $(G',H',\gamma')$ is obtained as follows, see Fig.~\ref{fig:derivative1}.

\begin{figure}
\centering
\subfloat[]{\includegraphics[scale=1]{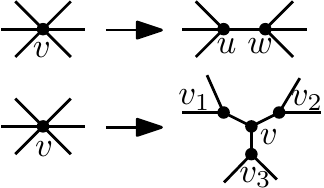}
\label{fig:splits}}
\hspace{10pt}
\subfloat[]{\includegraphics[scale=1]{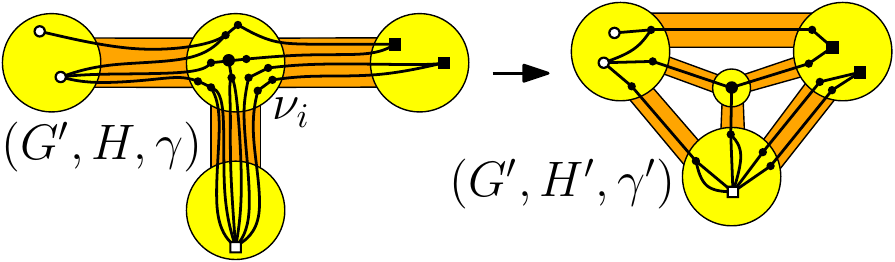}
\label{fig:derivative1}}
\caption{(a) Operation of vertex split (top) and multisplit (bottom). (b) The derivative of $(G,H,\gamma)$ in the normal form from Fig.~\ref{fig:normal}. On the left the input after splitting vertices, and on the right the obtained derivative; in the example
we have $H'=H_{\nu_i}'$, since every other $H_{\nu_j}$, for $i\not=j$, is a trivial graph with one vertex.}
\end{figure}

First, we construct the graph $G'$ by applying the following procedure
to every vertex $v\in V(G)$ such that the star $\gamma(St(v))$ has at least two edges, and thus, $v$ is not a ``spur''.
By slightly abusing the notation we will extended  $\gamma$
to take values on the vertices of $G'$.
The input $(G,H,\gamma)$ is clearly negative, if there exists a vertex $v$ in $G$ with four incident edges $vv_1,\ldots,vv_4\in E(G)$ such that $vv_1,vv_2,vv_3$ and $vv_4$ appear in the rotation at $v$ in the given order and $\gamma(vv_1)=\gamma(vv_3)\not=
\gamma(vv_2),\gamma(vv_4)$.
Otherwise, the following operations of vertex split and multisplit are applicable to $G$.

If $|E(\gamma(St(v)))|=2$, we apply the operation of vertex split to $v$ thereby turning it into an edge $uw$ as follows.
Let $E(\gamma(St(v)))=\{\rho_1,\rho_2\}$.
Let $v_1,\ldots, v_{deg(v)}$ be the neighbors of $v$.
Let $\{v_1\ldots v_l\} \cup \{v_{l+1}\ldots v_{deg(v)}\} $ be the partition of the neighbors
of $v$ such that $\gamma(vv_1)=\ldots =\gamma(vv_l)=\rho_1$ and $\gamma(vv_{l+1})=\ldots =\gamma(vv_{deg(v)})=\rho_2$.
We put $\gamma(u),\gamma(w):=\gamma(v)$, and join $u$ by an edge with the vertices in
$\{v_1\ldots v_l\}$ and $w$ with the vertices in $\{v_{l+1}\ldots v_{deg(v)}\} $.
Let $E_2\subset E(G')$ denote the set of edges in $G'$  consisting
of every edge $uw$ obtained by splitting $v\in V(G)$ such that $|E(\gamma(St(v)))|=2$.

If $|E(\gamma(St(v)))|\ge 3$, we analogously apply the operation of vertex multisplit to $v$ so that we replace $v$ with a star  $(\{v,v_1,\ldots,v_l\},\{vv_1,\ldots,vv_l\})$ with $|E(\gamma(St(v)))|=l$ edges, in which every leaf vertex $v_i$ is incident to the edges mapped by  $\gamma$ to the same edge of $H$ and $\gamma(v_i)=\gamma(v)$.
Let $V_{\ge 3}\subset V(G)$ denote the set of vertices in $G$  consisting
of the vertices $v\in V(G)$ such that $|E(\gamma(St(v)))|\ge 3$.
Note that $V_{\ge 3}$ can be treated also as a subset of $V(G')$.
Let $\mathcal{C}$ denote the set of connected components of $G'\setminus E_2\setminus V_{\ge 3}$.

Second, we construct $H'$: $V(H')= \{\rho^*| \rho\in E(H)\} \cup \{\nu_v| \ v\in V_{\ge 3} \}$, and $E(H')= \{\nu_v \rho^*|\ \rho\in E(\gamma(St(v))) \} \cup \{\gamma(C)\gamma(D)| \ 
C, D\in\mathcal{C} \  s.t. \ \mathrm{there \ exists}  \ e\in E_2  \ \mathrm{joining} \ C \ \mathrm{with} \ D\}$. We put $\gamma'(v):=\gamma(C)^*$, for  $v\in V(C)$ where $C\in\mathcal{C}$;
and $\gamma(v):=\nu_v$, for $v\in V_{\ge 3}$. Note that  $\nu_v$ and $v$ are fixed in the latter.

Finally, the embedding of $H'$, if it exists, is constructed as follows.
By F\'ary--Wagner theorem it is enough to give any embedding of $H'$ 
in the plane in a desired isotopy class, which we describe next by constructing a particular embedding of $H'$.
Let $H_{\nu_i}'$, $\nu_i\in V(H)$, denote the subgraph of $H'$ induced by $\{\rho^*| \ \rho=\nu_i\nu\in E(H) \} \cup\{\nu_v| \gamma(v)=\nu_i\}$. Let $\hat{H}_{\nu_i}'$ be obtained from $H_{\nu_i}'$ by adding 
to  $H_{\nu_i}'$ (1) the missing edges of the cycle  traversing $\{\rho^*| \ \rho=\nu_i\nu\in E(H) \}$ according to the rotation of $\nu_i$, let us denote the cycle by $C_{\nu_i}$; and (2) a new vertex joined  by the edges exactly with all the vertices of $C_{\nu_i}$.
Note that $\hat{H}_{\nu_i}'$ is vertex three-connected, and hence, if $\hat{H}_{\nu_i}'$ is planar, then the rotations at vertices in its embedding are determined up to the choice of orientation.
Note that the construction of $(G',H',\gamma')$ can be carried out in $O\left(\sum_{v\in V(G')}deg(v)\right)=O(|V(G)|)$.

Suppose that every $\hat{H}_{\nu_i}'$, for $\nu_i\in V(H)$, is a planar graph.
Let us fix for every $\nu_i\in V(H)$ an embedding of  $H_{\nu_i}'$, in which the cycle $C_{\nu_i}$  bounds the outer face and its orientation corresponds
to the rotation of $\nu_i$.
Such an embedding is obtained as a restriction of an embedding of $\hat{H}_{\nu_i}'$.
Note that for every $i$ the graph $H_{\nu_i}'$ does not have multiple edges.
Since $H$ also does not have multiple edges, $H_{\nu_i}'$ and $H_{\nu_j}'$, for $i\not=j$, are either disjoint (if $\nu_i\nu_j\not\in E(H)$) or intersect in a single vertex $(\nu_i\nu_j)^*$ (if $\nu_i\nu_j\in E(H)$). It follows that $H'$  does not have multiple edges.
The  desired  embedding of $H'$ is obtained by combining   embeddings of  $H_{\nu_i}'$, for $\nu_i\in V(H)$, in the same isotopy class as the embeddings of $H_{\nu_i}'$, that we  fixed above,
by identifying the corresponding vertices so that the restriction of the obtained embedding of $H'$ to every $H_{\nu_i}'$
has the rest of $H'$ in the interior of the outer face (of this restriction). 

\bigskip

\begin{figure}
\centering
\includegraphics[scale=1]{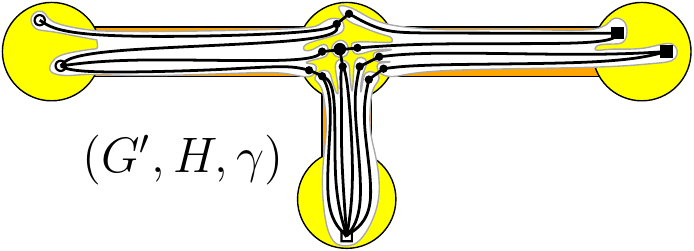}
\caption{Discs $D_\rho$ filled  white.}
\label{fig:derivative3}
\end{figure}

\begin{lemma}
\label{lemma:derivative}
The input $(G,H,\gamma)$ is  negative if one of the 
following three conditions is satisfied. 
There exists a vertex $v$ in $G$ with four incident edges $vv_1,\ldots,vv_4\in E(G)$ such that $vv_1,vv_2,vv_3$ and $vv_4$ appear in the rotation at $v$ in the given order and $\gamma(vv_1)=\gamma(vv_3)\not=
\gamma(vv_2),\gamma(vv_4)$.
The graph $\hat{H}_{\nu_i}'$, for some ${\nu_i}\in V(H)$, is not planar.
The rotation of a vertex $\nu_v\in V({H}_{\nu_i}')$, for some ${\nu_i}\in V(H)$ and $v\in V(G')$, in the obtained embedding of ${H}_{\nu_i}'$  is not  consistent with the rotation of $v$ in $G'$.

The input $(G,H,\gamma)$ is positive if and only if the input $(G',H',\gamma')$ is positive.
\end{lemma}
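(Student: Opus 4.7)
The proof naturally splits into two parts: showing the three stated negative conditions are genuine obstructions, and showing the equivalence $(G,H,\gamma)$ positive $\iff$ $(G',H',\gamma')$ positive in their absence.

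For the negative conditions, each is a local combinatorial witness of non-embeddability. If $vv_1,vv_2,vv_3,vv_4$ appear in this cyclic order at $v$ with $\gamma(vv_1)=\gamma(vv_3)=\rho$ and $\gamma(vv_2),\gamma(vv_4)\neq \rho$, then in any $H$-compatible embedding the stubs $vv_1,vv_3$ must both exit $N_\varepsilon(\gamma(v))$ through the single valve $\omega$ of $\rho$, and together with $\omega$ they separate the disc $N_\varepsilon(\gamma(v))$ into two regions; the rotation places one of $vv_2,vv_4$ in each region, but both must exit through valves that lie on the same side of $\omega$ along $\partial N_\varepsilon(\gamma(v))$, forcing an edge crossing. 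For the planarity obstruction, the restriction of an $H$-compatible embedding of $G$ to $N_\varepsilon(\nu_i)$, together with the cyclic order of valves along $\partial N_\varepsilon(\nu_i)$, yields a planar drawing of $H_{\nu_i}'$ inside a disc bounded by $C_{\nu_i}$; placing a universal vertex in the outer face then produces a planar embedding of $\hat{H}_{\nu_i}'$. The rotation obstruction is similar: the $3$-connectivity of $\hat{H}_{\nu_i}'$ pins down the rotation at every $\nu_v$ up to reflection, and this rotation must match the rotation of $v$ in $G'$ inherited from $\mathcal{D}$.

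For the forward direction of the equivalence, assume $(G,H,\gamma)$ admits an $H$-compatible embedding $\mathcal{D}$. For each $v\in V(G)$ with $|E(\gamma(St(v)))|=2$, insert the split edge $uw$ inside $N_\varepsilon(\gamma(v))$ by moving the two endpoints toward the centroids of the two bundles of edges of $\mathcal{D}$ that share a common valve; for each $v\in V_{\geq 3}$ place the multisplit star in a tiny sub-disc around $v$, with each leaf $v_i$ near the valve of the corresponding pipe. Every connected component $C\in\mathcal{C}$ then lies inside a single pipe of $\rho=\gamma(C)$, and we read off a drawing of $G'$ inside the thickening $\mathcal{H}'$ of the constructed embedding of $H'$: pipes of $\rho$ become vertex neighborhoods $N_\varepsilon(\rho^*)$, while valves of $\rho$ at $\nu_i$ become pipes of the edges of $H'$ incident to $\rho^*$. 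Planarity and rotation consistency of each $\hat{H}_{\nu_i}'$ are exactly what guarantee that this combinatorics is realised by the fixed isotopy class of $H'$ that the algorithm constructs.

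For the reverse direction, given an $H'$-compatible embedding of $G'$, contract every edge in $E_2$ together with every star edge incident to a vertex of $V_{\geq 3}$; each such contraction is local and crossing-free, because the merged vertices lie inside a single $N_\varepsilon$-neighborhood of $\gamma$-value, and the result is an $H$-compatible embedding of $G$ in the prescribed isotopy class. Lemma~\ref{lemma:normal} handles any degeneracies introduced when bringing the answer back to normal form. The main obstacle I expect is the bookkeeping around thickenings: precisely matching pipes and valves of $\mathcal{H}$ to vertex neighborhoods and pipes of $\mathcal{H}'$ so that isotopy classes and containment relations among connected components are preserved under the derivative, and verifying that the canonical isotopy class obtained on $G'$ from splits/multisplits is exactly the one realised by the constructed embedding of $H'$.
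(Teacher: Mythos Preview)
Your treatment of the three negative conditions and of the forward (``only if'') direction is fine and matches the paper's approach.

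The gap is in the reverse (``if'') direction. You write that one simply contracts the edges of $E_2$ and the star edges at $V_{\ge 3}$ and obtains an $H$-compatible embedding of $G$. But in an $H'$-compatible embedding of $G'$, the disc $N_\varepsilon(\rho^*)$ for $\rho=\nu_i\nu_j\in E(H)$ contains vertices of $G'$ with $\gamma$-value $\nu_i$ \emph{and} vertices with $\gamma$-value $\nu_j$; contracting edges does nothing to separate them. To build an $H$-compatible embedding you must produce, for every $\nu_i$, a single topological disc $D_{\nu_i}$ containing exactly the vertices of $\gamma^{-1}(\nu_i)$ and meeting the other $D_{\nu_j}$'s only along their boundaries, so that a homeomorphism can send each $D_{\nu_i}$ onto $N_\varepsilon(\nu_i)$. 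This forces you to \emph{split} every $N_\varepsilon(\rho^*)$ by a simple arc $K_\rho$ that has all $\gamma^{-1}(\nu_i)$-vertices on one side and all $\gamma^{-1}(\nu_j)$-vertices on the other, crossing no edge of $G'$ mapped to an edge of $H'$. Your sentence ``the merged vertices lie inside a single $N_\varepsilon$-neighborhood of $\gamma$-value'' is precisely the desired conclusion, not something you already have.

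The existence of $K_\rho$ is the actual content of the paper's proof of this direction: one observes that the relevant subgraph $G_\rho'$ inside $N_\varepsilon(\rho^*)$ is bipartite (with parts $\gamma^{-1}(\nu_i)$ and $\gamma^{-1}(\nu_j)$), then invokes the fact that every plane bipartite graph admits a simple closed curve crossing each edge exactly once (\cite[Lemma~6]{FKMP15}, restated here as Lemma~\ref{lemma:orderCyclic}); an auxiliary construction turns this closed curve into the required arc $K_\rho$ with endpoints on $\partial N_\varepsilon(\rho^*)$ separating the $H_{\nu_i}'$-valves from the $H_{\nu_j}'$-valves. Without this step your argument does not go through, and your final paragraph correctly flags exactly the place where it breaks.
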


\begin{proof}
The first part of the claim is obvious.
For the second part, we start with ``only if'' direction, which
is easier.

To this end given an $H$-compatible embedding of $G$, we first easily construct an $H$-compatible embedding of $G'$ with respect to the input $(G',H,\gamma)$.
In the second step, for every $\rho\in E(H)$, we construct a disc $D_\rho$  containing  the restriction to $G_\rho'=\bigcup_{C\in \mathcal{C}, \gamma(C)=\rho} C$ of the $H$-compatible embedding of $G$ in its interior as follows.
Let $(G_\rho')_\times$ be a plane graph obtained from the embedding of $G_\rho'$ by turning the crossings of edges of $G_\rho'$ with both valves of $\rho$ into vertices; and parts of the valves joining closest pairs of crossing (which were turned into vertices) into edges.
The disc $D_\rho$, see Fig.~\ref{fig:derivative3}, is a small neighborhood of the union of the inner faces in the embedding of  $(G_\rho')_\times$.
Finally, we apply a homeomorphism of the plane that maps
 the union of the discs $D_\rho$'s with $G_\rho'$
into the thickening of $H'$, so that every $D_\rho$ is mapped
onto $N_\varepsilon(\rho^*)$ and a small neighborhood of every $v\in V_{\ge 3}$ onto $N_\varepsilon(\nu_v)$.  This concludes the proof of
the ``only if'' direction.

\begin{figure}
\centering
\includegraphics[scale=1]{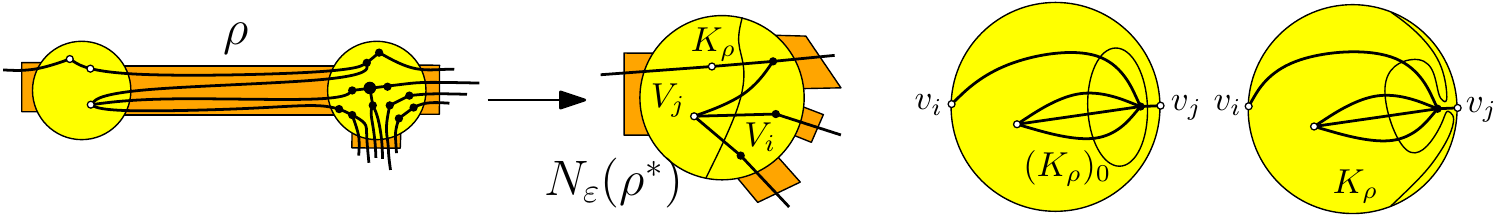}
\caption{The curve $K_\rho$ splitting the cluster of $\rho$ in the derivative on the left.
The construction of the curve  $(K_\rho)_0$ and its deformation into $K_\rho$ on the right.}
\label{fig:splittingCluster}
\end{figure}

\begin{figure}
\centering
\includegraphics[scale=1]{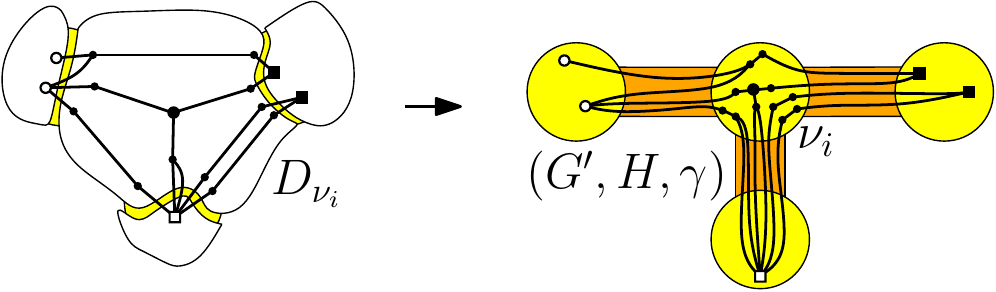}
\caption{Construction of an $H$-compatible embedding of $G'$ from  the $H'$-compatible embedding of $G'$ in Fig.~\ref{fig:derivative1}.}
\label{fig:derivative4}
\end{figure}

It remains to prove the ``if'' direction.
We show that by~\cite[Lemma 6]{FKMP15}
given an $H'$-embedding of $G'$ in the given isotopy class, every $N_{\varepsilon}(\rho^*)$, for  $\rho^*\in V(H')$, $\rho=\nu_i\nu_j$,  can be split by a simple continuous curve $K_\rho$,
 see Fig.~\ref{fig:splittingCluster}, disjoint from every edge mapped by $\gamma'$ to an edge of $H'$ into two parts as follows. The vertices in $\gamma^{-1}(\rho)$, for which $\gamma(v)=\nu_i$, are in one part and  the vertices, for which $\gamma(v)=\nu_j$, are in the other part. 

For a while suppose that  $K_\rho$'s exist. Then it follows that an $H$-compatible embedding of $G$ in the given isotopy class exists.
Analogously, to the previous paragraph, for every $\nu_i\in V(H)$, we construct a disc $D_{\nu_i}$, see Fig.~\ref{fig:derivative4},  containing the subgraph of $G'$ induced by the vertex set $\gamma^{-1}(\nu_i)$.
We construct  $D_{\nu_i}$ so that (1)
the intersection of the boundary of $D_{\nu_i}$ with the thickening of $H'$  is $\bigcup_{\rho=\nu_i\nu\in E(H)}K_\rho$, which is intersected by the boundary in the order given by the rotation at $\nu_i$;  (2)
$N_{\varepsilon}(\nu)\cap D_{\nu_i}=\emptyset$, for $\nu\not\in V(H_{\nu_i}')$; and (3) every pair of discs $D_{\nu_i}$ and
$D_{\nu_j}$, for $i\not=j$, is internally disjoint. We perturb  discs $D_{\nu_i}$'s a little bit in order to make them pairwise disjoint.
Then we apply a homeomorphism of the plane that maps  the union of the discs $D_{\nu_i}$'s with $G'$
into the thickening of $H$, so that every $D_{\nu_i}$ is mapped
onto $N_\varepsilon(\nu_i)$.
Finally, we contract the edges incident to the vertices in $V_{\ge 3}$
and contract edges in $E_2$ in order to obtain a desired
$H$-compatible embedding of $G$. 
It remains to show that $K_\rho$'s exist, which is rather simple,
but a detailed argument requires some work.
The claim essentially follows due to the fact that given an embedded
bipartite graph in the plane there exists a simple closed curve that crosses every edge of the graph exactly once.

\begin{figure}
\centering
\includegraphics[scale=1]{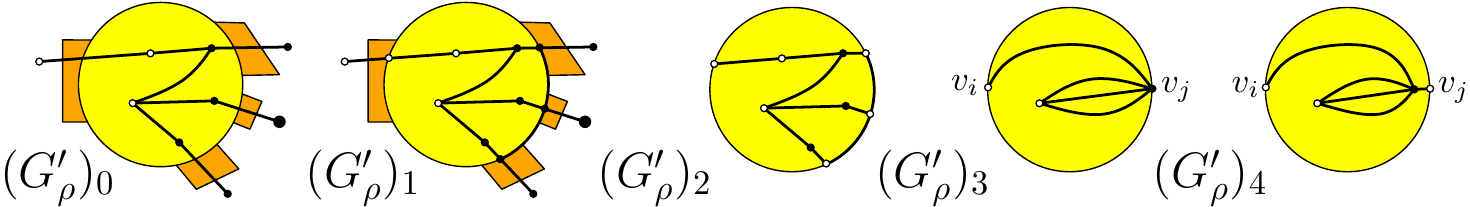}
\caption{Construction of $(G_\rho')_4$ from $G_\rho'$ in Fig.~\ref{fig:splittingCluster}.}
\label{fig:gConst}
\end{figure}

Let $G_{\rho}'$ be as above. We construct an auxiliary graph
 $(G_{\rho}')_4$ in five steps, see Fig.~\ref{fig:gConst}.
The graph $G_{\rho}'$ has the bipartition $V_i\cup V_j=V(G_\rho')$ such that $\gamma(V_i)=\nu_i$ and $\gamma(V_j)=\nu_j$, where $\rho=\nu_i\nu_j$.
Let $(G_{\rho}')_0$ be the union of  $G_{\rho}'$  with its 
incident edges in $G'$.
Let $(G_{\rho}')_1$ be a plane graph obtained from the embedding of $(G_{\rho}')_0$ by turning the crossings of edges of $G_{\rho}'$ with valves into vertices; and parts of the boundary of $N_\varepsilon(\rho^*)$ joining consecutive pairs of crossings into edges as follows.  A consecutive pair of vertices both of which are joined by an edge with a vertex of $V_i$ (or $V_j$), is joined by 
an edge contained in the boundary of $N_\varepsilon(\rho^*)$
so that the edge is disjoint from every valve of an edge in $H_{\nu_j}'$ (or $H_{\nu_i}'$).
Let $(G_{\rho}')_2$ be the subgraph of $(G_{\rho}')_1$  contained
in $N_\varepsilon(\rho^*)$.
Let $(G_{\rho}')_3$ be the plane graph obtained from $(G_{\rho}')_2$ by contracting all the edges that do not join a vertex of $V_i$ with a vertex of $V_j$.
Let $v_i$ and $v_j$ denote the vertices that resulted from the  
contractions  in the construction of $(G_{\rho}')_3$.
We assume that $v_i$ is joined by an edge with vertices in $V_j$
and $v_j$ with vertices in $V_i$. Note that $v_i$ or $v_j$ might not exist. If none of $v_i$ and $v_j$ exist we simply have 
$(G_{\rho}')_3=G_{\rho}'$.
Finally, let $(G_{\rho}')_4$ be the plane graph obtained from 
$(G_{\rho}')_3$ by applying the vertex split to $v_j$, if exists, so that the newly created edge is incident to a vertex of degree one denoted by $v_j$.
We can assume that $(G_{\rho}')_4$ is drawn in 
$N_\varepsilon(\rho^*)$ such that $v_i$ is contained in the valve
of an edge of $H_{\nu_j}'$ and $v_j$ in the valve of an edge of $H_{\nu_i}'$.

%Let $e_i$ denote the edge of  $(G_{\rho}')_4$ incident to $v_i$.
By taking the bipartition $V_i\cup\{v_i\}$ and $V_i\cup\{v_j\}$ of
$(G_{\rho}')_4$,  it follows by~\cite[Lemma 6]{FKMP15} that there exists a simple curve closed curve $(K_\rho)_0\subset N_\varepsilon(\rho^*)$ intersecting every edge of $(G_{\rho}')_4$ exactly once.
Finally, we construct $K_\rho$ by cutting and deforming $(K_\rho)_0$ as follows,
 see Fig.~\ref{fig:splittingCluster} right.  We distinguish two cases depending on whether $v_j$ exists.

First, suppose that $v_j$ exists.
The desired curve $K_\rho$ is obtained by cutting 
$(K_\rho)_0$ at its crossing point with the edge incident to $v_j$,
and applying a homeomorphism of $N_\varepsilon(\rho^*)$
that takes the severed end points very close to a pair of the  boundary points of $N_\varepsilon(\rho^*)$ that split the boundary into two parts, one of which contains the  valves of the edges in $H_{\nu_i}'$ and the other the valves of the edges in $H_{\nu_j}'$.
Second, if $v_j$ does not exist, we cut  $(K_\rho)_0$ at its
arbitrary point in the outer face of $(G_{\rho}')_4$, and apply a similar homeomorphism of $N_\varepsilon(\rho^*)$.

In the end, we extend $K_\rho$ a little bit so that both of its end points are contained in the boundary of $N_\varepsilon(\rho^*)$ and
split the contracted vertices in $(G_{\rho}')_4$ thereby recovering 
$G_{\rho}'$.
\end{proof}

\subsection{Locally injective inputs}
\label{sec:loc_inj}

The following lemma implies that by iterating the derivative at most $|E(G)|=O(|V(G)|)$ many times we obtain an input that is locally injective.

\begin{lemma}
\label{lemma:running} 
If $(G,H,\gamma)$ is in the normal form then 
 $p((G')^N,(H')^N,(\gamma')^N) \le p(G,H,\gamma)$. 
 If additionally  $(G,H,\gamma)$ is not locally 
 injective then the inequality is strict, i.e.,
 $p((G')^N,(H')^N,(\gamma')^N) <p(G,H,\gamma)$.
Moreover,  $p((G')^N,(H')^N,(\gamma')^N)\le p(G,H,\gamma)-\frac{1}{2}t$, where $t$ is the number of vertices in $G$ that do not satisfy the condition in property~(\ref{it:li}) or~(\ref{it:v32}) of locally injective inputs.
\end{lemma}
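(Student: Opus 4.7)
The plan is to evaluate $p((G')^N,(H')^N,(\gamma')^N)$ in closed form using the edge counts produced by the derivative and the normal-form step. From the construction, $|E(G')| = |E(G)| + |E_2| + \sum_{v \in V_{\ge 3}} l_v$, where $l_v := |E(\gamma(St(v)))|$, since each $2$-split contributes the single edge $uw$ and each multi-split contributes the $l_v$ star edges. In the normal-form step the only edges contracted in $G'$ are the non-$E_2$, non-star ones, which are exactly the edges lying inside components of $\mathcal{C}$; summed over $\mathcal{C}$ these total $|E(G)|$. Hence $|E((G')^N)| = |E_2| + \sum_v l_v$. For $H'$, the multisplits contribute $\sum_v l_v$ first-type edges, and the $2$-splits contribute $N$ second-type edges, where $N$ is the number of distinct pairs $\{\rho_1,\rho_2\}\subseteq E(H)$ realized as $\gamma(St(v))$ by some $2$-split $v$; the first and second-type sets are disjoint since only second-type edges have both endpoints of the form $\rho^*$. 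Checking that condition~(2) of the normal form does not cut edges of $H'$ in the generic case, one obtains the compact identity
\[
p(G,H,\gamma) - p((G')^N,(H')^N,(\gamma')^N) \;=\; \bigl(|E(G)| - |E(H)|\bigr) - \bigl(|E_2| - N\bigr).
\]

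The non-increase of $p$ then reduces to proving the combinatorial inequality $|E_2| - N \le |E(G)| - |E(H)|$, which I would rewrite as $\sum_\pi (s_\pi - 1) \le \sum_\rho (n_\rho - 1)$, where $\pi$ ranges over active pairs with multiplicity $s_\pi$, and $n_\rho$ is the number of $G$-edges mapping to $\rho$. The key observation is that every $2$-split $v$ with $\gamma(St(v)) = \{\rho_1,\rho_2\}$ carries at least one $\rho_1$-edge and one $\rho_2$-edge of $G$. I would then orient each edge of the auxiliary ``pair multigraph'' $\mathcal{P}$ (with vertex set $E(H)$) toward one of its endpoints and charge the excess $s_\pi - 1$ to that endpoint; the observation above bounds the total charge at each $\rho$ by $n_\rho - 1$, which upon summing gives the inequality and thus $p((G')^N)\le p(G)$.

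For the strict decrease under the failure of local injectivity, and the quantitative $\tfrac12 t$ bound, the plan is to track the slack in the charging of the previous paragraph. A vertex $v$ failing property (i) has some $\rho$ with $d_\rho(v) \ge 2$, producing an extra $\rho$-edge of $G$ that cannot be absorbed by the $2$-split accounting; a degree-one vertex $v$ failing property (ii) forces $n_\rho \ge 2$ for $\rho$ equal to the $\gamma$-value of its unique incident edge. Each (i)-failure thus yields at least one unit of slack, while two (ii)-failures sharing the same $\rho$ jointly account for one unit of slack, yielding $\tfrac12$ per failure; summing over all failing vertices gives the claimed $\tfrac12 t$ bound. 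The main obstacle will be the bookkeeping for overlapping failures: a $2$-split vertex may itself fail (i), several (ii)-failures may cluster at the same $\rho$, and an (i)-failure at a vertex of large degree may interact with a large active-pair multiplicity $s_\pi$. I expect to resolve these by partitioning failures according to the $H$-edge they ``witness'' and arguing per-$\rho$ that the assigned credits fit inside the residual $n_\rho - 1 - (\text{charge from the $2$-split orientation})$, with special care also needed to confirm that the normal-form step applied to $H'$ does not worsen the identity used in Paragraph 1.
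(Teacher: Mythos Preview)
Your first-paragraph reduction is correct and in fact packages the computation more crisply than the paper does: modulo the property-(2) suppression (which, as you note and as the paper also remarks, can only decrease the potential further), one has $p((G')^N,(H')^N,(\gamma')^N) = |E_2| - N$ on the nose, and with the paper's notation $N = |E(H_0')|$ where $H_0'$ is the subgraph of $H'$ induced by $\{\rho^* : \rho \in E(H)\}$. So the whole lemma does reduce to the single combinatorial inequality $|E_2| - N \le |E(G)| - |E(H)|$, strict when local injectivity fails.

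Where your plan and the paper diverge is in how that inequality is proved, and here I see a genuine gap in your argument. The observation that each $2$-split carries a $\rho_1$-edge and a $\rho_2$-edge only gives $s_\pi \le \min(n_{\rho_1}, n_{\rho_2})$ for each active pair $\pi = \{\rho_1,\rho_2\}$; it does \emph{not} by itself produce an orientation of $\mathcal{P}$ with the excess landing within the budgets $n_\rho - 1$, because several pairs can compete for the same $\rho$ and you have supplied no mechanism (Hall-type or otherwise) resolving that competition. For instance, when $G$ is a $6$-cycle covering an $H$-triangle one has $s_\pi = 2$ for all three pairs and $n_\rho - 1 = 1$ for all three edges, and a valid assignment exists only via a system of distinct representatives --- not via your stated observation. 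The paper sidesteps this entirely with a different decomposition: it introduces $c$, the number of tree components of $H_0'$, and proves the two halves $N \ge |E(H)| - c$ (immediate) and $|E_2| \le |E(G)| - c$. The second half comes from an injective map $\mu$ from $E_2$-incidences $(v,e_2)$ to $G'$-edge incidences $(v,e)$, together with the observation that each tree component of $H_0'$ forces at least two incidences outside the image of $\mu$ (at leaves or isolated vertices of $H_0'$). For the $\tfrac12 t$ refinement the paper simply tracks further slack in $\mu$: every vertex failing (i) or (ii) contributes a missed incidence, worth $\tfrac12$ in the doubled inequality $2|E_2|\le 2|E(G)|$. Your per-$\rho$ credit scheme is aiming at the same slack, but the overlap issues you yourself flag are precisely where the tree-component structure of $H_0'$ is doing the real work in the paper's proof.
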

\begin{proof}
Note that edges incident to fixed vertices in $((G')^N,(H')^N,(\gamma')^N)$ do not contribute towards $p((G')^N,(H')^N,(\gamma')^N)$, and thus, we will deal only with the remaining edges.
We consider vertices in $V_{\ge 3}\subseteq V(G')$ to be their corresponding  vertices in $V((G')^N)$.
Since suppressing the vertices of degree two in $G'$ and $H'$ violating property~(\ref{it:supp}) of the normal form in order to make the property satisfied does not increase the value of the potential, for the purpose of the proof of the lemma by somewhat abusing the notation we assume that we keep such vertices in $(G')^N$ and $(H')^N$.

 Let $H_0'$ be the subgraph of $(H')^N=H'$ induced by its vertex subset $\{\rho^*| \ \rho\in E(H)\}$. Every connected graph on $n$ vertices  has at least $n-1$ edges. It follows that ($\diamond$) the number of edges in $H_0'$ is at least $|V(H_0')|-c= |E(H)|-c$, where $c$ is the number of connected components of $H_0'$ that are trees. We use this fact together with the following observation to prove the claim.
 
 Suppose for a while that $H_0'$ is connected.
 The set of edges of $(G')^N$ not incident to any $v\in V_{\ge 3}$, where the vertices in $V_{\ge 3}$ are now fixed, forms a matching $M'$ whose edges are in one-to-one correspondence with edges in $E_2$ in $G'$.
Note that none of the end vertices of edges in $E_2$ is of degree one in $G'$. 
Let $I_2=\{(v,e_2)\ |v\in e_2\in E_2\}$.
By using the natural one-to-one correspondence between the edges of $G'\setminus E_2\setminus V_{\ge 3}$ and the edges of $G$,
it follows that the size of $M'$ is upper bounded by the size of $E(G)$, since
 $2|M'|=2|E_2|=|I_2|\le |\{(v,e)| \ e\cap e_2=\{v\}, e_2\in E_2, e\in E(G') \}| \le 2|E(G)|$.
 Hence, it follows that 
 \begin{equation}
 \label{eqn:11}
|M'|=|E_2|\leq |E(G)|
\end{equation} 
 
Furthermore,  $|M'|=|E(G)|$, only if $(G,H,\gamma)$ is locally injective, and $H_0'$ contains a cycle.
Indeed,  if $H_0'$ does not contain a cycle, it
is either a trivial graph consisting of a single vertex, or it contains a pair of vertices $(\rho_0)^*$ and $(\rho_1)^*$ of degree one such that 
$\gamma'(v_0)=(\rho_0)^*$ and 
$\gamma'(v_1)=(\rho_1)^*$, where $v_0\in e_0\in E_2, v_0\in f_0=v_0u_0\not\in E_2$, and $v_1\in e_1\in E_2, v_1\in f_1=v_1u_1\not\in E_2$. It holds that
$|M'|<|E(G)|$, because $(u_0,f_0)$ and  $(u_1,f_1)$ 
are not in the image of the injective map $\mu$
from $I_2$ taking $(v,e_2)$, $v\in e_2\in E_2$ to a pair $(v,e)$, $e\in E(G')$, such that $e\cap e_2=\{v\}$. Note that there exists at least two such pairs also if $H_0'$ is trivial (which is a fact that we will need later).  Namely, $(u,uv)$ and $(v,uv)$, for some $\gamma'(uv)= \rho^*\in V(H_0')$. By the same token, we have that $|M'|<|E(G)|$, if $(G,H,\gamma)$ is not locally injective. In fact, if $|M'|=|E(G)|$ then every connected component of $G'$ must be a cycle.

If $H_0'$ has more connected components, we then  have $|M'|\le |E(G)|-c$, where the inequality is strict  if $(G,H,\gamma)$ is not locally injective.
Indeed, if $|M'|= |E(G)|-c$, then there exist
exactly $2c$ pairs $(v,e)$, $v\in e\in E(G')\setminus E_2\setminus V_{\ge 3}$, that are not 
in the image of the map $\mu$.
However, we showed in the previous paragraph that 
there are at least $2c$ such pairs 
$(u,f)$, where both $u$ and $f$ are mapped by $\gamma'$ to a vertex of degree at most one in $H_0'$. Hence, if $|M'|= |E(G)|-c$ then all the pairs, that are not contained in the image of $\mu$, are accounted for by such $(u,f)$'s, in which case 
$V_{\ge 3}$ is exactly the subset of $V(G')$ of vertices of degree at least three. This 
establishes property~(\ref{it:li}) of locally injective inputs.
Finally, to establish also~(\ref{it:v32}) we consider the natural correspondence  of the vertices of degree one in $G$ with the vertices of degree one in $G'$.
Note that none of the pairs $(u,f)$, $u\in f\in E(G')$, where $u$ is of degree one, is in the image of $\nu$. 
Thus, if $|M'|= |E(G)|-c$, then every leaf $u\in V(G')$ is  mapped by $\gamma'$ to some $\rho^*$, $\rho\in E(H)$, which is an isolated vertex 
or a leaf of $H_0'$. We need  to show that 
there is no other edge besides $f\ni u$ mapped by $\gamma' $ to $\rho^*$. If $\rho^*$ is an isolated vertex of $H_0'$ this is immediate, since otherwise $|M'|< |E(G)|-c$. If
$\rho^*$ is a leaf of $H_0'$, every other edge $g\not=f$ such that $\gamma'(g)=\rho^*$ must share both end vertices with an edge of $E_2$,
but then $\rho^*$ has degree at least two  in $H_0'$ (contradiction).

Putting it together,  we have $|M'|\le |E(G)|-c$ and 
($\diamond$) $|E(H)|-c \le |E(H_0')|$, where the first inequality is strict if $(G,H,\gamma)$ is not locally injective as we just showed. 
Since the remaining edges of $(G')^N$ and $(H')^N$ contributes together zero towards $p((G')^N,(H')^N,(\gamma')^N)$, summing up the inequalities concludes the proof.

The ``moreover'' part follows immediately due to the  fact that every vertex of $G$ not satisfying~(\ref{it:li}) or~(\ref{it:v32}) causes the slack of $\frac 12$ in~(\ref{eqn:11})
as shown by the previous analysis.
\end{proof}

Given an input $(G,H,\gamma)$ in the normal form.
Similarly as in Section~\ref{sec:derivative}, let $V_{\ge 3}\subseteq V(G)$ denote the set of vertices in $G$  consisting of the vertices $v\in V(G)$ such that $|E(\gamma(St(v)))|\ge3$.
The input is \emphh{strongly locally injective} if 
it is locally injective  and

\begin{enumerate}[(iii)]
\item \label{it:v3} 
every vertex in $V_{\ge 3}$ is fixed.
\end{enumerate}

For convenience, we would like work
with strongly locally injective inputs, see Fig.~\ref{fig:derivative2}.
The following lemma shows that if the input $(G,H,\gamma)$ is locally injective, but not strongly, we just derivate it one more time in order to arrive at a strongly locally injective input.

\begin{figure}
\centering
\includegraphics[scale=1]{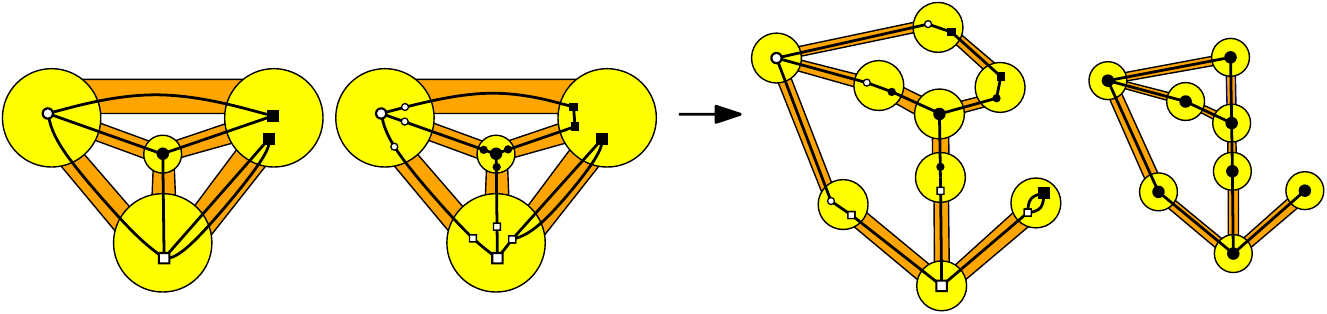}
\caption{Constructing the normal form and derivating one more time the derivative from Fig.~\ref{fig:derivative1}
on the left, we obtain an input that is strongly locally injective in the normal form on the right.}
\label{fig:derivative2}
\end{figure}

\begin{lemma}
\label{lemma:strong} 
Suppose that $(G,H,\gamma)$ in the normal form is locally injective.
Then in $((G')^N,(H')^N,(\gamma')^N)$, every vertex $v\in V((G')^N)$, such that $|E((\gamma')^N(St(v)))|\ge 3$ is fixed.
Moreover, $((G')^N,(H')^N,(\gamma')^N)$ is still locally injective.
\end{lemma}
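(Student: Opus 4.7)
My plan is to exploit the rigidity that local injectivity of $(G,H,\gamma)$ imposes on the derivative $G'$, and then to convert this structural picture into the two required claims.

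First I would observe that property~(\ref{it:li}) forces $|E(\gamma(St(v)))|=\deg(v)$ for every $v\in V(G)$: two distinct edges at $v$ reach distinct neighbors, which lie in distinct clusters, so they carry distinct images in $H$. This yields a clean trichotomy on $V(G)$: the set $V_{\ge 3}$ consists exactly of the vertices of degree $\ge 3$, the $2$-split vertices have degree exactly two, and every remaining (``spur'') vertex has degree one. Tracing this through $G'\setminus V_{\ge 3}\setminus E_2$, every multisplit leaf loses its star edge to $V_{\ge 3}$, every $2$-split endpoint loses its $E_2$-partner, and each spur is untouched; in every case the surviving degree is exactly one. Hence each $C\in\mathcal{C}$ is a single edge on two vertices.

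The first conclusion of the lemma now follows at once. Every contracted vertex $c_C$ of $(G')^N$ represents one such edge-component $C=\{x,y\}$, and the edges at $c_C$ in $(G')^N$ are precisely the external edges of $C$ in $G'$, i.e., the star edges or $E_2$-edges incident to $x$ or $y$. Each of $x,y$ contributes at most one such external edge, so $|E((\gamma')^N(St(c_C)))|\le 2$; hence the only vertices of $(G')^N$ with star-image of size $\ge 3$ are the multisplit centers, which were declared fixed in the construction of the derivative.

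For the moreover-claim I would verify~(\ref{it:li}) and~(\ref{it:v32}) by case analysis on the vertex type of $(G')^N$. The fixed multisplit centers satisfy~(\ref{it:li}) immediately, since their star edges carry the pairwise distinct labels $\nu_v\rho_i^*$. Degree-$0$ contracted vertices do not arise (if $C$ consists of two spurs then neither has an external edge and $c_C$ is deleted by the normal-form step). For a degree-$2$ contracted vertex $c_C$ I would check that its two neighbors lie in distinct clusters of $H'$; the only subcase requiring care is when both external edges of $C$ belong to $E_2$, coming from $2$-splits of distinct vertices $z,z'$ whose connecting original edge is the internal edge of $C$---here a short case analysis using simplicity of $H$ and the defining property of a $2$-split forces the two $E_2$-targets to lie in different clusters. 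For a degree-$1$ contracted vertex $c_C$, the missing external edge at one endpoint of $C$ forces that endpoint to be a spur $y\in V(G)$, and property~(\ref{it:v32}) applied to $y$ gives uniqueness of the $H$-image of the original edge $yz$ at $y$; I would lift this uniqueness through the natural correspondence between edges of $G$ and of $G'\setminus V_{\ge 3}\setminus E_2$ to deduce that the sole external edge at $c_C$ has a unique $(\gamma')^N$-label, establishing~(\ref{it:v32}).

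The main obstacle I anticipate is the bookkeeping for the degree-$2$, two-$E_2$-edges subcase: one must track how a single component $C$ arises simultaneously from two $2$-splits at both endpoints of its originating edge in $G$, and verify that the two resulting $E_2$-targets really do live in different clusters of $(G')^N$. Everything else amounts to carefully unpacking the definitions of multisplit, $E_2$, and the normal-form contraction.
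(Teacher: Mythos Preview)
Your proposal is correct and follows the same approach as the paper: both argue that the lemma is a direct consequence of the definition of the derivative. The paper's own proof is literally the single sentence ``The lemma follows directly from the definition of the derivative,'' so what you have written is simply a careful unpacking of that assertion---your trichotomy $|E(\gamma(St(v)))|=\deg(v)$, the observation that every component $C$ is a single edge, and the subsequent case analysis are exactly the kind of verification the paper leaves to the reader.
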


\begin{proof}
The lemma follows directly from the definition of the derivative.
\end{proof}

Deciding in, roughly,  quadratic time in $p(G,H,\gamma)$,  which is sufficient for us
due to the bottleneck discussed in Section~\ref{sec:alg}, whether the strongly locally injective input $(G,H,\gamma)$ is  positive, is quite straightforward.
The reason is that in this case the order of  crossings of a valve with edges, that are incident to the same vertex $v$ of $G$, along the valve  in an $H$-compatible embedding of $G$ is determined by the rotation at $v$.
In order to decide if a desired $H$-compatible embedding of $G$ exists, we just detect if for every valve $\omega$ such an order of all the edges crossing $\omega$ exists, such that together the orders are compatible.
To this end we consider relations between unordered pairs of edges of $G$ such that the edges in a pair are mapped by $\gamma$ to the same edge of $H$,
and two pairs are related if they intersect in a pair of vertices.
In the following we assume that $(G,H,\gamma)$ is strongly locally injective.

 Let $\Xi=\{\{e,f\}| \ e,f\in E(G) \ s.t. \ e\not= f \ \mathrm{and} \  \gamma(e)=\gamma(f) \}$.
Two elements $\{e_1,f_1\}\in \Xi$ and $\{e_2,f_2\}\in \Xi$ are \emphh{neighboring} if
$|e_1\cap e_2|=1$, $|f_1\cap f_2|=1$ and $\gamma(e_1 \cap e_2)= \gamma(f_1 \cap f_2)$; we write 
 $\{e_1,f_1\} \sim \{e_2,f_2\}$.
An element $\{e_1,f_1\}\in \Xi$ is a \emphh{boundary pair} if there exists at most
one $\{e_2,f_2\}\in \Xi$  such that  $\{e_1,f_1\}$ and $\{e_2,f_2\}$ are neighboring.
Let $\Xi_1,\ldots \Xi_l$ be equivalence classes of the transitive closure of the relation $\sim$.
A boundary pair $\{e_1,f_1\}\in \Xi$ is \emphh{determined} if there exists
a pair of edges $e_2$ and $f_2$ such that $|e_1\cap e_2|=1$, $|f_1\cap f_2|=1$, $\gamma(e_1 \cap e_2)= \gamma(f_1 \cap f_2)$ and $\gamma(e_2)\not=\gamma(f_2)$.
By properties~(\ref{it:li}) and~(iii), the subgraph $G_{\Xi}$ of $G$ induced by $\bigcup_{\{e,f\}\in \Xi}\{e,f\}$ has maximum degree two.
First, we consider the case when a connected component of $G_{\Xi}$ does not contain a vertex of degree one.

\begin{lemma}
\label{lemma:badCycle}
If there exists an equivalence class $\Xi_c$, such that the subgraph $G_{\Xi_c}$ of $G$ induced by $\bigcup_{\{e,f\}\in \Xi_c}\{e,f\}$ is a cycle, then $(G,H,\gamma)$ is a negative input.
\end{lemma}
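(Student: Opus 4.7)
The plan is to assume for contradiction that $(G,H,\gamma)$ is positive, fix an $H$-compatible embedding of $G$ in the given isotopy class, and use the resulting realization of the cycle $C:=G_{\Xi_c}$ as a simple closed curve $\widetilde C$ in $\mathcal H$ to derive a contradiction.  Label the edges of $C$ cyclically as $y_1,\dots,y_L$, set $v_i=y_i\cap y_{i+1}$ and $\bar\rho_i=\gamma(y_i)$ (indices mod $L$), and let $W=\bar\rho_1\bar\rho_2\cdots\bar\rho_L$ be the image closed walk in $H$.  The strategy is to show that $W$ is a proper power $W=W_0^n$ with $n\ge 2$, and then invoke the classical fact that a simple closed curve in an orientable surface represents either the identity or a \emph{primitive} element of the fundamental group (not a proper power $h^m$ with $m\ge 2$).

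First I would establish the periodicity of $W$.  Since $G_{\Xi_c}=C$, every edge $y_i$ lies in some pair of $\Xi_c$, so there is always a nonzero shift $d$ with $\bar\rho_i=\bar\rho_{i+d}$.  Starting from one pair $\{y_a,y_b\}\in\Xi_c$ with signed shift $d=b-a$, I would propagate it around $C$ via the neighboring relation.  A neighboring step $\{y_i,y_j\}\sim\{y_{i+\varepsilon},y_{j+\delta}\}$ with $\varepsilon,\delta\in\{\pm 1\}$ requires the compatibility condition $\gamma(v_i)=\gamma(v_j)$ in the $(+,+)$ case (and the analogous identity in the other three cases), together with $\bar\rho_{i+\varepsilon}=\bar\rho_{j+\delta}$.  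A ``mixed-mode'' step with $\varepsilon\ne\delta$ changes the signed shift $j-i$ by $\pm 2$; iterating such mixed steps must, after finitely many iterations, either produce a pair of consecutive edges $\{y_k,y_{k+1}\}$---impossible because property~(\ref{it:li}) forces $\bar\rho_k\ne\bar\rho_{k+1}$---or drive the shift to zero, which would mean pairing an edge with itself.  Hence only the two translational modes $(+,+)$ and $(-,-)$ can be used to traverse the whole equivalence class; closing the chain after $L$ steps yields $\bar\rho_i=\bar\rho_{i+d}$ and $\gamma(v_i)=\gamma(v_{i+d})$ for every $i$.  Thus cyclic rotation by the smallest positive shift $d^*\mid L$ is a symmetry of $\gamma|_C$, and $W=W_0^{L/d^*}$ with $W_0=\bar\rho_1\cdots\bar\rho_{d^*}$ and $L/d^*\ge 2$.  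This step is the main obstacle: one has to check that the argument still works when an edge lies in several pairs of $\Xi_c$ (as happens when $N(\rho)\ge 3$ for some $\rho$), so that the natural ``pairing function'' is multivalued and one must rule out genuinely conflicting shift information arising from different pair chains.

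Once the proper-power form $W=W_0^n$ is in hand, the contradiction is short.  The thickening $\mathcal H$ deformation retracts onto $H$, so $\pi_1(\mathcal H)\cong\pi_1(H)$ is a free group $F$, and the retract sends $\widetilde C$ to the loop $W$; thus $[\widetilde C]=[W]=[W_0]^n$ in $F$.  If $\widetilde C$ were nullhomotopic in $\mathcal H$, then $[W_0]^n=1$ in $F$ would force $[W_0]=1$, but $W_0$ is non-backtracking (again by~(\ref{it:li})) and of positive length, hence represents a nontrivial element of $\pi_1(H)$---a contradiction.  Otherwise $[\widetilde C]$ must be primitive by the fact cited above, contradicting that $[W_0]^n$ with $n\ge 2$ and $[W_0]\ne 1$ is a proper power.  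Either way, no $H$-compatible embedding of $G$ in the prescribed isotopy class exists, so $(G,H,\gamma)$ is a negative input.
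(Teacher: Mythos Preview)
Your approach coincides with the paper's: both establish that the image closed walk $W$ of the cycle $C$ has a nontrivial period and then appeal to orientability. The paper compresses your step~2 into a single sentence (``in an $H$-compatible embedding $C$ must wind around a point in the plane more than once''); your formulation via primitivity of simple closed curves in $\pi_1$ of an orientable surface is the rigorous version of that same idea.

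The gap you flag in step~1 is real but closes with one observation you are missing: the mode is constant along the $\sim$-chain. If $\{y_i,y_j\}\sim\{y_{i+1},y_{j+1}\}$ then $\gamma(v_i)=\gamma(v_j)$; since the input is in normal form the two endpoints of $y_{i+1}$ lie in distinct clusters, so $\gamma(v_{i+1})\ne\gamma(v_i)=\gamma(v_j)$, and as $\gamma(v_{i+1})$ is an endpoint of $\gamma(y_{i+1})=\gamma(y_{j+1})$ it must equal $\gamma(v_{j+1})$. Hence the next forward step is again $(+,+)$; the identical calculation pins each of the four modes. With this, your monotone-shift argument for the mixed modes goes through with no alternation worry: in a persistent $(+,-)$ chain the shift changes by $-2$ every step and must hit $0$ or $\pm 1$ within $L$ steps (split into $L$ odd and $L$ even, $d$ even or odd), contradicting either $e\ne f$ or property~(\ref{it:li}). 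In the surviving translational mode every pair in $\Xi_c$ carries the same shift $d$, so your concern about an edge lying in several pairs evaporates---there is no conflicting shift information---and $\bar\rho_i=\bar\rho_{i+d}$ holds for all $i$ at once, giving $W=W_0^{\,L/\gcd(d,L)}$ with $\gcd(d,L)<L$.
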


\begin{proof}
Let $\Xi_c=\{\{e_0,f_0\},\ldots \{e_{m-1},f_{m-1}\}\}$, where
$\{e_p,f_p\} \sim \{e_{p+1 \mod m},f_{p+1 \mod m}\}$.
Since $\bigcup_{\{e,f\}\in \Xi_c}\{e,f\}$ induces a cycle $C$ of $G$, there exists the minimum value $p_0<m$ such
that $e_{p_0}=f_0$ or $f_{p_0}=e_0$.
Note that $p_0|m$, since $\gamma(e_{q})=\gamma(e_{q+ap_0 \mod m})$, for every $0\leq q, a<m$, and 
that $C=e_0,\ldots, e_{m-1}$.
Due to the fact that the plane is orientable, it follows that the cycle $C$ does not admit an $H$-compatible emebedding, since in an $H$-compatible embedding $C$ must wind around a point in the plane more than once.
\end{proof}

Note that Lemma~\ref{lemma:badCycle} does not cover only the case when $G_{\Xi_c}$ is a union of two cycles.
 By~(\ref{it:v32}), it must be that if $\Xi_c$ contains a boundary pair, then it, in fact,
contains exactly two boundary pairs, both of which are determined.
Hence, in the following we assume that every $\Xi_c$ either gives rise to a pair of cycles, or contains exactly two determined boundary pairs.
We construct for every valve $\omega$ of $\rho\in E(H)$ the relation
$(E_{\rho}, <_{\omega})$, where
$E_{\rho}=\{e\in E(H)|  \ \gamma(e)=\rho\}$.
We define relations $(E_{\rho}, <_{\omega})$ by propagating  relations enforced by the determined boundary pairs, for every  determined pair contained in $\Xi$.
 We assume that 
 $(E_{\rho}, <_{\omega})$ encodes the increasing order of the crossing points 
 of edges with $\omega$ as encountered when traversing  $\omega\subset N_\varepsilon(\nu)$ in
 the direction inherited from the counterclockwise orientation of the boundary of $N_\varepsilon(\nu)$.

Let $\{e_1,f_1\}\in \Xi_c\subseteq \Xi$ be {determined}.
Let $\Xi_c=\{\{e_1,f_1\},\ldots \{e_m,f_m\}\}$ such that  $\{e_p,f_p\} \sim \{e_{p+1},f_{p+1}\}$.
Let $\gamma(e_1)=\gamma(f_1)=\nu_i\nu_j,\gamma(e_0)=\nu_i\nu_{j'}, \gamma(f_0)=\nu_i\nu_{j''}$, where $\nu_{j'}\not=\nu_{j''}$
and $|e_0\cap e_1|=1$ and $|f_0\cap f_1|=1$.
W.l.o.g. we suppose that $\nu_i\nu_j, \nu_i\nu_{j'}$ and $\nu_i\nu_{j''}$ appear in the rotation of $\nu_i$ in this order counterclockwise.
Let $\omega_1$ be the valve of $\nu_i\nu_j$ at $\nu_i$. 
Let $\omega_2$ be the valve of $\nu_i\nu_j$ at $\nu_j$. 
We put the  relation $f_1<_{\omega_1}e_1$ into  $(E_{\nu_i\nu_j}, <_{\omega_1})$ and $e_1<_{\omega_2}f_1$  into $(E_{\nu_i\nu_j}, <_{\omega_2})$.
Recursively, we put $f_{p+1}<_{\omega_{2p+1}}e_{p+1}$ into  $(E_{\nu_i\nu_j}, <_{\omega_{2p+1}})$ and $e_{p+1}<_{\omega_{2(p+1)}}f_{p+1}$  into $(E_{\nu_i\nu_j}, <_{\omega_{2(p+1)}})$,
if $f_{p}<_{\omega_{2p-1}}e_{p}$ and $e_{p}<_{\omega_{2p}}f_{p}$, and vice-versa,
where $\omega_{2p}$ and $\omega_{2p+1}$ are valves contained in the boundary of the same
disc.

If $G_{\Xi_c}$ is
a union of two disjoint cycles we add $f_{p}<_{\omega_{2p}}e_{p}$
and $e_{p}<_{\omega_{2p-1}}f_{p}$, or
 $f_{p}>_{\omega_{2p}}e_{p}$
and $e_{p}>_{\omega_{2p-1}}f_{p}$ for every $p$, in correspondence with the isotopy class of $G$.

\begin{lemma}
\label{lemma:relations}
Suppose that every equivalence class $\Xi_c$ contains exactly two determined boundary pairs or $G_{\Xi_c}$ is
a union of two disjoint cycles.
 We can test in $O((p(G,H,\gamma))^2+|V(G)|)$ time if $(G,H,\gamma)$ is positive or negative.
\end{lemma}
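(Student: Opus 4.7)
The plan is to build, for every valve $\omega$ of every edge $\rho=\nu_i\nu_j\in E(H)$, the total order $(E_\rho,<_\omega)$ encoding crossings along $\omega$, by running the propagation rules described just before the lemma on every equivalence class $\Xi_c$, and then to check that the collected pairwise orderings extend to mutually consistent total orders at every valve.

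First I would compute the sets $E_\rho$ and the pair set $\Xi$ in one pass over $V(G)$ and $E(G)$; since $|E(G)|=O(|V(G)|)$ by planarity and the normal form, this costs $O(|V(G)|)$. Using the rotation pointers I would then partition $\Xi$ into equivalence classes $\Xi_1,\ldots,\Xi_l$ in time linear in $|\Xi|$. By properties~(\ref{it:li}),~(\ref{it:v32}), and~(\ref{it:v3}), every interior vertex of $G_{\Xi_c}$ has degree two, so each $G_{\Xi_c}$ is either a disjoint union of two paths (strip) or of two cycles (torus); the hypothesis of the lemma guarantees that every strip class carries exactly two determined boundary pairs.

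For each strip class I would initiate the propagation from one of its determined boundary pairs and walk along the chain $\{e_1,f_1\}\sim\cdots\sim\{e_m,f_m\}$, emitting the alternating constraints $f_p<_{\omega_{2p-1}}e_p$ and $e_p<_{\omega_{2p}}f_p$ prescribed in the preceding paragraph; when the walk reaches the second boundary pair, the accumulated constraint must match the one dictated locally by that pair, otherwise the input is declared \emph{negative}. For each torus class I would try the two admissible orientations and retain the one consistent with the prescribed isotopy class of $G$ (or declare negative if neither fits). Once the propagation is finished, I would aggregate, for each valve $\omega$, the collected constraints into a directed graph on $E_\rho$ and attempt a topological sort; the input is positive precisely when every such digraph is acyclic. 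In that case any topological extension furnishes the desired total orders, and these encode an actual $H$-compatible embedding of $G$ in the prescribed isotopy class because, in a strongly locally injective input, the rotation of every vertex of $G$ is forced (vertices in $V_{\ge 3}$ are fixed by~(\ref{it:v3}), while every other vertex has degree at most two and admits only a trivial rotation), and the reversed-order requirement between the two valves of a pipe is built into the paired updates $f_p<_{\omega_{2p-1}}e_p,\ e_p<_{\omega_{2p}}f_p$. Conversely, any $H$-compatible embedding realises compatible total orders, so either a contradiction during propagation or a directed cycle in some valve digraph certifies that no $H$-compatible embedding exists.

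For the running time, the initial construction and propagation cost $O(|\Xi|+|V(G)|)$. The topological sort at the valve of $\rho$ takes $O(|E_\rho|^2)$ time in the worst case. Summing over $\rho\in E(H)$, using $\sum_{\rho}(|E_\rho|-1)=p(G,H,\gamma)$, the inequality $\sum_\rho(|E_\rho|-1)^2\le\bigl(\sum_\rho(|E_\rho|-1)\bigr)^2$, and $|E(H)|=O(|V(G)|)$, we obtain $\sum_\rho |E_\rho|^2=O(p(G,H,\gamma)^2+|V(G)|)$, matching the claimed bound. The main obstacle will be arguing that the combined total orders realise the \emph{given} isotopy class of $G$ rather than merely some isotopy class producing the same rotations and valve orders; this relies on the forced rotations at all vertices of $G$ together with the containment data of connected components recorded in the isotopy class, both of which are preserved when an embedding is read off from the valve orders via the local vertex splits/contractions used throughout Section~\ref{sec:algorithm}.
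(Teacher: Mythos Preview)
Your approach is the same as the paper's at its core, but there are two genuine gaps.

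\textbf{The relation is a tournament, not an arbitrary partial order.} By definition $\Xi$ contains \emph{every} unordered pair $\{e,f\}$ with $\gamma(e)=\gamma(f)$, and each such pair lies in some $\Xi_c$ and therefore receives a constraint during propagation. Hence after propagation each $(E_\rho,<_\omega)$ is a complete relation (a tournament), and ``acyclic'' already means ``total order''; there is no freedom for a topological extension. Your phrase ``any topological extension furnishes the desired total orders'' is harmless only because the extension is unique. If you really believed the relation could be a proper partial order, you would owe an argument that every linear extension yields a valid $H$-compatible embedding, and that is not true in general. (A side effect: since $|\Xi|=\sum_\rho\binom{|E_\rho|}{2}=O(p(G,H,\gamma)^2)$, your claim that $\Xi$ can be listed in $O(|V(G)|)$ time is wrong; it costs $O(p^2)$, which is still within the stated bound.)

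\textbf{The isotopy class is not automatically realised.} Your final paragraph asserts that the containment data of connected components is ``preserved'' by reading off the embedding from the valve orders. This is exactly the step the paper does \emph{not} take for free: the paper constructs the embedding dictated by the total orders and then \emph{checks} in $O(|V(G)|)$ time whether it lies in the given isotopy class, noting that ``the only thing that can make the input negative is the containment of connected components of $G$ in the interiors of its faces.'' The propagation from determined boundary pairs uses only the rotation of $H$; the torus rule uses the containment only between the two cycles of a single class $\Xi_c$. Nothing in the propagation consults the containment relation between two connected components of $G$ that lie in different classes or in disjoint parts of $\mathcal H$. A concrete failure: if $\gamma(C_1)$ and $\gamma(C_2)$ are vertex-disjoint subgraphs of $H$, the valve orders say nothing about which face of $C_2$ contains $C_1$; that is forced by the embedding of $H$, and the given isotopy class of $G$ may disagree with it. So acyclicity at every valve is necessary but not sufficient; you must add the final $O(|V(G)|)$ isotopy check, exactly as the paper does.
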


\begin{proof}
The relations $(E_{\rho}, <_{\omega})$ can be clearly constructed in $O((p(G,H,\gamma)^2)$ time,
since only edges not incident to fixed vertices are contained in pairs of $\Xi$.
If the constructed $(E_{\rho}, <_{\omega})$ is a total order for all $\rho\in E(H)$
and its valve $\omega$, the isotopy class of every  $H$-compatible embedding of $G$ is determined by an embedding constructed as follows.  We first draw the crossings of valves with edges of $G$ according to the orders $(E_{\rho}, <_{\omega})$; join every pair of consecutive crossing
on the same edge of $G$ by a straight-line segment contained in a pipe of an edge of $H$; and finish by drawing the straight-line segments joining vertices  of $G$ with the already drawn parts of edges contained in pipes.
It is enough to check if the obtained embedding
is in the desired isotopy class of $G$, which can be easily done in 
 $O(|V(G)|)$ time by traversing orders $(E_{\rho}, <_{\omega})$.
 Note that the only thing that can make the input negative is the containment of connected components of $G$ in the interiors of its faces.

If  $(E_{\rho}, <_{\omega})$, for some $\rho\in E(H)$, contains a cyclic chain of inequalities, the input is clearly negative.
\end{proof}

\subsection{Algorithm}
\label{sec:alg}

We start with a description of the decision algorithm proving the first part of the theorem. \\

{\bf Decision Algorithm.}
Let $(G,H,\gamma)=(G_0,H_0,\gamma_0)$ be the input.
We work with inputs in which $G$ contains multiple edges and loops. However, w.l.o.g we assume that $G$ does not contain a pair of multiple edges joining the same pair of vertices 
that form a face of size two, i.e., a lens,
except for the outer face. 
Moreover, we assume that whenever a lens is created during the execution of the algorithm, 
the lens is eliminated by deleting one of its edges. 

An execution of the algorithm is divided into steps.
During the $s$-th step we process $(G_s,H_s,\gamma_s)$
and output $(G_{s+1},H_{s+1},\gamma_{s+1})$ as follows.

First, by following the procedure described in Section~\ref{sec:instances} we either construct an instance $((G_s)^N,(H_s)^N,(\gamma_s)^N)$ in the normal form that is positive if and only if $(G_s,H_s,\gamma_s)$ is positive,  or output that $(G,H,\gamma)$ is negative, if the hypothesis of the first part of Lemma~\ref{lemma:normal} is satisfied.

 Second, if  $((G_s)^N,(H_s)^N,(\gamma_s)^N)$ is not strongly locally injective  we proceed as follows.
 If $(G_s,H_s,\gamma_s)$ satisfies the hypothesis of the first part
of Lemma~\ref{lemma:derivative} with $((G_s)^N,(H_s)^N,(\gamma_s)^N)$ playing the role of $(G,H,\gamma)$ we output
 that $(G,H,\gamma)$ is negative; otherwise we  construct the derivative $((G_{s}^N)',(H_s^N)',(\gamma_s^N)')=(G_{s+1},H_{s+1},\gamma_{s+1})$ defined in Section~\ref{sec:derivative} and proceed to the $(s+1)$-st step.
Otherwise,  $((G_s)^N,(H_s)^N,(\gamma_s)^N)$ is strongly locally injective and we construct equivalence classes $\Xi_1,\ldots \Xi_l$ 
from  Section~\ref{sec:loc_inj} defined by $((G_s)^N,(H_s)^N,(\gamma_s)^N)$
and proceed as follows.

We check if there exists a class $\Xi_c$ satisfying the hypothesis of Lemma~\ref{lemma:badCycle}. If this is the case, then we output that $(G,H,\gamma)$ is negative.
 Otherwise, we construct relations $(E_{\rho}, <_{\omega})$, for every $\rho\in (H_s)^N$ and its valve  $\omega$.
If there exists $(E_{\rho}, <_{\omega})$ that is not a total order
we output that $(G,H,\gamma)$ is negative;
otherwise we check if the isotopy class of an $H$-compatible embedding of $G_s$ enforced by 
 relations $(E_{\rho}, <_{\omega})$ is the same as the given one
 and output that  $(G,H,\gamma)$ is positive if and only if this is the case.

The correctness of the algorithm follows directly from Lemma~\ref{lemma:normal},\ref{lemma:derivative},\ref{lemma:badCycle}, and~\ref{lemma:relations}. \\

{\bf Running time analysis.}
By Lemma~\ref{lemma:running}, Lemma~\ref{lemma:strong}, and Lemma~\ref{lemma:linear}
the number of steps of our algorithm is $O(|V(G)|)$.
Furthermore, we show that $|V((G_{s})^N)|=O(|V(G)|)$,
for every $s$.

%The number of edges in $(G')^N$
%is at most $|E_2|+\sum_{v\in V_{\ge 3}}deg(v)$.
By Lemma~\ref{lemma:running}, $p((G_{s+1})^N,(H_{s+1})^N,(\gamma_{s+1})^N)\le p((G_s)^N,(H_s)^N,(\gamma_s)^N)-\frac{1}{2}t$, where $t$ is the number of vertices in $(G_s)^N$ that do not satisfy the condition in property~(\ref{it:li})  or~(\ref{it:v32}) of locally injective inputs. 
The number of newly created vertices in $(G_{s+1})^N$ of degree at least three satisfying the condition in property~(\ref{it:li}) during the $s$-th  step of the algorithm  is at most $t=
2(p((G_s)^N,(H_s)^N,(\gamma_s)^N)-p((G_{s+1})^N,(H_{s+1})^N,(\gamma_{s+1})^N))$.
Note that a vertex of degree $d\ge 3$ in 
$((G_{s})^N,(H_{s})^N,(\gamma_{s})^N)$ satisfying the condition of property~(\ref{it:li}) becomes a fixed vertex of degree $d$ in $((G_{s+1})^N,(H_{s+1})^N,(\gamma_{s+1})^N)$.

Let $V_{\ge 3}$ denote the set of fixed vertices of degree at least three in $(G_s)^N$.
By the previous paragraph, $|V_{\ge 3}|\le |V(G)|+ \sum_{s\ge 0}2(p((G_s)^N,(H_s)^N,(\gamma_s)^N)-p((G_{s+1})^N,(H_{s+1})^N,(\gamma_{s+1})^N))\le |V(G)|+ 2p((G_0)^N,(H_0)^N,(\gamma_0)^N)= O(|E(G)|)=O(|V(G)|)$,
for every $s$, due to the definition of the potential.
The fact $|V((G_{s})^N)|=O(|V(G)|)$, for every $s$, then follows by~(\ref{it:supp}) in the definition of the normal form.   
Indeed, the number of  vertices of degree two in $(G_s)^N$ mapped to redundant vertices in $(H_s)^N$ is linear in the number of remaining vertices in $(G_s)^N$ due to Lemma~\ref{lemma:linear}.
Hence, the number of vertices in $(G_{s})^N$ that are not fixed vertices of degree at least three is linear  in $p(G,H,\gamma)+|V((H_s)^N)|=p(G,H,\gamma)+|V(H_0')|+|V_{\ge 3}|$ due to Lemma~\ref{lemma:linear}, where $H_0'$ is defined as in the proof of Lemma~\ref{lemma:running} with $((G_{s-1})^N,(H_{s-1})^N,(\gamma_{s-1})^N)$ playing the role of $(G,H,\gamma)$.
It follows that the number of vertices in $|V((G_s)^N)|$, for every $s$, is linear in $p(G,H,\gamma)+|V(G)|$, since the size of the subset of $V(H_0')$ of non-redundant vertices is upper bounded by $|E(G)|=O(|V(G)|)$.
This more-or-less follows  inductively from~(\ref{eqn:11}) in the proof of Lemma~\ref{lemma:running},
except that in every step we consider  the subgraph of $(G_{s'})^N$, $s'< s$, induced by the set $E_2$, where $E_2$ is defined with $((G_{s'})^N,(H_{s'})^N,(\gamma_{s'})^N)$ playing the role of $(G,H,\gamma)$. Indeed, the rest of the edges in $G_{s'+1}$ are incident to fixed vertices and are mapped by $\gamma_{s'+1}$ to a redundant vertex of $H_{s'+1}$. 
Formally, we show by induction on $s'$ that  $|E((G_{s'})^N\setminus V_{s'})|\le |E(G)|$, where $V_{s'}$ is the set of fixed vertices of degree at least two in $V((G_{s'})^N)$ for  all $s'$.
In the base case we have $|E((G_{1})^N\setminus V_{1})|\le|E_2|\le |E(G)|$ by~(\ref{eqn:11}). 
For $s'>0$, we have $|E((G_{s'+1})^N\setminus V_{s'+1})|\le |E((G_{s'})^N\setminus V_{s'})|$ by the argument that
we used to prove (\ref{eqn:11}).

After having shown that $O(|V((G_{s-1})^N)|)=O(|V(G)|)$, it is easy to see that the $s$-th step of the algorithm can be easily carried out in $O(|V(G)|)$ time, since the planarity testing  and embedding construction
of all $\hat{H}_{\nu_i}'$'s needed in the construction of the derivative
can be done in linear time in $O(|V(G_s)|)=O(|V((G_{s-1})^N)|)$~\cite{HoTa74_planarity}, and the construction of the instance in the normal form from the given one takes the same running time.
The last step of the algorithm in which we construct orders $(E_{\rho}, <_{\omega})$
can be easily done in $O(|V(G)|^2)$, since the 
number of pairs in $\Xi$ is $O(|V(G)|^2)$ due to $|V(G_{s})|=O(|V(G)|)$ .  \\

{\bf Algorithm constructing an embedding.}
The construction of an  $H$-compatible embedding of $G$ for strongly locally injective inputs  is given
by the set of total orders $(E_{\rho}, <_{\omega})$, for every $\rho\in H$ and a valve $\omega$ of $\rho$. 
Therefore in order to construct a desired $H$-compatible embedding of $G$ we 
need to reverse the order of steps in the decision algorithm.
To this end we make the proof of the second part of  
Lemma~\ref{lemma:derivative} algorithmic.
In order words, we need to construct the order in which a curve $K_{\rho}$ intersect  edges 
of $G'$.
%We note that it is enough to apply the algorithm
%of Angelini et al.~\cite{ADDF17} for strip planarity testing, which is constructive and runs in quadratic time
%in the number of vertices of $G'$ mapped by $\gamma'$ to $\rho$. This gives $O(|V(G)|)$ running time for a single step, and hence,  $O(|V(G)|^2)$ running time in total.
Since we can construct a desired order  in linear time
by the following lemma, the overall quadratic running time follows.

\begin{lemma}
\label{lemma:orderCyclic}
Given a plane bipartite graph $G$, we can construct in  $O(|V(G)|)$ time a cyclic order $\mathcal{O}$ of edges of $G$ such that there exists a simple closed curve in the plane properly crossing every edge of $G$ exactly once,
but otherwise disjoint from $G$, in the order given by $\mathcal{O}$.
\end{lemma}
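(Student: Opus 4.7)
\emph{Proof plan.} The plan is to reformulate the statement as the construction of an \emph{A-trail} in the dual graph $G^*$: an Eulerian circuit of $G^*$ in which, at each vertex, the pairs of edges used consecutively form a non-crossing chord diagram in the cyclic rotation around that vertex. Such an A-trail corresponds canonically to a simple closed curve in the plane that crosses each edge of $G$ exactly once, and the desired cyclic order $\mathcal{O}$ is just the sequence of edges of $G$ traversed by the circuit. Since $G$ is bipartite, every facial walk has even length, so every vertex of $G^*$ has even degree and $G^*$ is Eulerian; the existence of the A-trail is exactly the content of the already invoked \cite[Lemma~6]{FKMP15}. What remains is to make the construction run in $O(|V(G)|)$ time.

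I would construct the A-trail as follows. Fix the bipartition $V(G)=A\sqcup B$. Initialize a transition system $\mathcal{T}$ at each face $f$ of $G$ by the ``A-matching'': at every A-corner of $f$ (a corner whose apex is an $A$-vertex) pair the two incident boundary edges. Since A- and B-corners strictly alternate around~$f$, this defines a non-crossing perfect matching of the boundary edges of $f$. The transition system $\mathcal{T}$ decomposes the edge-set of $G^*$ into $|A|$ disjoint closed trails, one \emph{atom} per A-vertex, each realised in the plane as a tiny simple loop encircling a single $A$-vertex. Next, I would build an auxiliary multigraph $\mathcal{M}$ with one node per atom and one edge per B-corner, joining the atoms that currently own the two boundary edges incident to that corner. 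Connectedness of $G$ forces connectedness of $\mathcal{M}$. Fix any spanning tree $\tau$ of $\mathcal{M}$ (via BFS/DFS); for every edge of $\tau$, perform a local \emph{transition swap} at the corresponding B-corner, replacing the two A-matching pairs $(e_0,e_1),(e_2,e_3)$ involving the edges of that corner by the still-non-crossing nested pairs $(e_1,e_2),(e_0,e_3)$. Each swap merges two atoms into one, so after exactly $|A|-1$ swaps only a single A-trail remains, and a final Eulerian traversal reads off $\mathcal{O}$.

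The running time analysis is routine: from the rotation system of $G$ one builds $G^*$, the bipartition, the initial $\mathcal{T}$, and the multigraph $\mathcal{M}$ in linear time; a spanning tree is computed in linear time; each swap is a constant-time update of the doubly-linked structure encoding $\mathcal{T}$; and the final traversal is linear in $|E(G)|=O(|V(G)|)$, by Lemma~\ref{lemma:linear}. The main obstacle I anticipate is the bookkeeping in the non-$2$-connected case, where a vertex may appear several times in a single facial walk and the same atom may revisit a face more than once: one has to identify each corner by its incidence in the rotation system (not merely by its apex vertex) and maintain atom membership under a union--find structure, so that at each candidate B-corner one can decide in amortised constant time whether a swap actually merges two distinct atoms (and is therefore a tree edge of $\mathcal{M}$) or would split one (and must be skipped).
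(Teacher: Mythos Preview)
Your construction and the paper's are the same idea in two languages. The paper stays in the primal: it draws a spanning tree $T$ on one colour class $V_1$ through the faces of $G$ (edge-disjoint from $G$), contracts $T$ to a point, and simply reads off the rotation there---the curve is implicitly the boundary of a thin neighbourhood of $T$. Your atoms are exactly the small loops around the $V_1$-vertices (their rotations), and a B-corner swap is precisely what happens to that boundary when one tree edge is added between the two adjacent $A$-vertices; your $\tau\subseteq\mathcal{M}$ plays the role of $T$. The paper's phrasing is shorter because ``contract and read the rotation'' bypasses the explicit transition-system machinery, while your A-trail framing makes the curve and its non-crossing property explicit and ties it to a standard combinatorial object.

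One genuine gap: the lemma does not assume $G$ connected, and your $\mathcal{M}$ is disconnected whenever $G$ is, since every B-corner joins two atoms lying in a single component of $G$. The paper's construction handles this for free: a face of $G$ whose boundary walk has several connected components still receives a star on \emph{all} its incident $V_1$-vertices, bridging the components inside that face. You can patch your argument the same way---for each face with a disconnected boundary, add an $\mathcal{M}$-edge across boundary components and perform the corresponding non-crossing merge---but as written the proof only covers connected $G$. A smaller wrinkle: your swap description (``the two A-matching pairs $(e_0,e_1),(e_2,e_3)$'' yielding ``nested'' new pairs) is literally accurate only for the first swap in a given face; after earlier swaps the current partners of $e_1,e_2$ need not be adjacent. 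The conclusion is still correct, because each tree-swap merges two disjoint simple closed curves into one and hence the induced matching in every face stays non-crossing, but the justification should say that rather than ``nested''.
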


\begin{proof}
Let $V_1\uplus V_2=V(G)$ be the bipartition of $V(G)$.
We construct a plane graph $G\cup T$ such that $V(T)=V_1$,
$E(T)\cap E(G)=\emptyset$, and $T$ is a spanning tree of $(G\cup T)[V_1]=T$.
The tree $T$ is constructed in linear time as follows. We subdivide every face of $G$ by as many edges as possible in ${V_1\choose 2}$, while keeping the resulting graph $G_0$ plane and its subgraph $G_0[V_1]$ without multiple edges, e.g., we perform the subdivision so that every subgraph of $G_0[V_1]$ subdiving a face of $G$ is a star containing all the vertices incident to the  face.
Since  rotations at vertices are stored in doubly linked lists, $G_0$ can be constructed in time $O(|V(G)|)$.
Note that $G_0[V_1]$ is connected,
since every face of $G$ is incident to a vertex in $V_1$.
 The tree $T$ is obtained  as a spanning tree of $G_0[V_1]$.

We contract all the edges of  $T$ in $G\cup T$.
Let $\mathcal{O}'$ be  the rotation at the vertex, that $T$ was contracted into, in the resulting graph.
The desired order $\mathcal{O}$ is obtained 
by substituting in $\mathcal{O}'$ for every edge its corresponding edge in $G$.
\end{proof}
%Since the construction of equivalence classes $\Xi_1,\ldots \Xi_l$
%and the orders $(E_{\rho}, <_{\omega})$ 

\bibliographystyle{plain}

%\bibliographystyle{vlastni}

%\printbibliography
\bibliography{bib}

%%%%%%%%%%%%%%%%%%%%%%%%%%%%%%%%%%%%%%%%%%%%%%%%%%%%%%%%%%%%%%%%%%%%%%%%%%%%%%%%%%%%%%%%%

\end{document}